\numberwithin{equation}{section}
\numberwithin{theorem}{section}
\numberwithin{proposition}{section}
\numberwithin{lemma}{section}
\numberwithin{corollary}{section}
\numberwithin{definition}{section}
\numberwithin{example}{section}
\numberwithin{remark}{section}
\numberwithin{note}{section}
\begin{document}

\renewcommand{\PaperNumber}{***}

\FirstPageHeading

\ShortArticleName{Integrability of Stochastic Birth-Death processes via Differential Galois Theory}

\ArticleName{Integrability of Stochastic Birth-Death processes via Differential Galois Theory}

\Author{Primitivo B. ACOSTA-HUM\'ANEZ~$^{\dag^1}$$^{\dag^2}$, Jos\'e A. CAPIT\'AN~$^{\dag^3}$$^{\dag^4}$ and Juan J. MORALES-RUIZ~$^{\dag^3}$}

\AuthorNameForHeading{P.B.~Acosta-Hum\'anez, J.A.~Capit\'an and J.J.~Morales-Ruiz}

\Address{$^{\dag^1}$~Instituto Superior de Formaci\'on Docente Salom\'e Ure\~na - ISFODOSU,\\ Recinto Emilio Prud'Homme, Calle R.C. Tolentino \# 51, esquina 16 de Agosto,\\ Los Pepines, Santiago de los Caballeros, Rep\'ublica Dominicana}
\EmailD{\href{mailto:primitivo.acosta-humanez@isfodosu.edu.do}{primitivo.acosta-humanez@isfodosu.edu.do}}
\Address{$^{\dag^2}$~Facultad de Ciencias B\'asicas y Biom\'edicas, Universidad Sim\'on Bol\'ivar,\\Sede 3, Carrera 59 No. 58-135, Barranquilla, Colombia}

\Address{$^{\dag^3}$~Depto. de Matem\'atica Aplicada. E.T.S. Edificaci\'on. Avda. Juan de Herrera 6.\\ Universidad Polit\'ecnica de Madrid. 28040, Madrid, Spain}
\EmailD{\href{mailto:ja.capitan@upm.es}{ja.capitan@upm.es}}
\EmailD{\href{mailto:juan.morales-ruiz@upm.es}{juan.morales-ruiz@upm.es}} %
\Address{$^{\dag^4}$~Grupo de Sistemas Complejos. Universidad Polit\'ecnica de Madrid. 28040, Madrid, Spain}

\Abstract{Stochastic birth-death processes are described as continuous-time Markov processes
in models of population dynamics. A system of infinite, coupled ordinary differential equations (the so-called 
master equation) describes the time-dependence of the probability of each system state. Using a generating
function, the master equation can be transformed into a partial differential equation. In this contribution we analyze 
the integrability of two types of stochastic birth-death processes (with polynomial birth and death rates) 
using standard differential Galois theory. We discuss the integrability of the PDE via a Laplace transform 
acting over the temporal variable. We show that the PDE is not integrable except for the (trivial) case in which 
rates are linear functions of the number of individuals.}

\Keywords{differential Galois theory; stochastic processes; population dynamics; Laplace transform.}

\Classification{12H05; 35A22; 35C05; 92D25.} 

\section{Introduction}\label{sec:master}

Stochastic birth-death processes~\cite{yule:1924,feller:1939,kendall:1948} are widely used in the mathematical
modeling of interacting populations. They are a special case of continuous-time Markov processes~\cite{karlin:1975}
for which transitions between states are either births (which increase the state variable in one unity) and deaths
(which decrease the state variable by one). Birth-death processes have been used in different fields of applied
science, with many applications in ecology~\cite{nisbet:1982,hubbell:2001,alonso:2006}, queueing theory~\cite{saati:1961}, 
epidemiology~\cite{alonso:2007} and population genetics~\cite{novozhilov:2006}, to mention just a few.
In contrast to deterministic models, these kind of processes make the assumption that population changes take place
in discrete numbers, and this fact introduces variability and noise when compared to deterministic
dynamics~\cite{capitan:2015,capitan:2017}. In the limit of infinite system size, these models are the counterpart
of deterministic dynamics that usually appear in demography and population dynamics~\cite{mckendrick:1912,volterra:1926}.

Only few instances of birth-death processes are analytically tractable in mathematical terms. Most of the
results are related to the probability distributions observed at stationarity~\cite{karlin:1975,haegeman:2011}. Little
is known, however, about how these probability distributions change over time before reaching the equilibrium
state. The existence of closed-form, analytical solutions for certain families of birth-death processes, even when
certain restrictions on the parameters are forced to ensure the existence of analytical solutions, would be a powerful
way to get valuable insights on how probability distributions behave over time and reach the steady state in such
processes. In this contribution we focus on the existence of closed-form analytical solutions for
two widely-used stochastic birth-death models with non-constant birth and death rates.

In order to properly contextualize the problem, we start by describing the mathematical framework used
in the theory of stochastic birth-death processes.
The central quantity used to characterize quantitatively a population formed by a number of individuals is
precisely $N$, the number of individuals observed, where $N\in\mathbb{N}\cup\{0\}$. Let $P_N(t)$ be the
probability that there are $N$ individuals at time $t$ ---the latter variable is regarded as a continuous time---
in the system. Given a particular way of how new individuals enter the system (births events) or leave the
population (death events), the goal of the theory is to describe mathematically this probability. The stochastic
process is fully described once the probability rates $B_N$ (births) and $D_N$ (deaths) are defined.
For simplicity we assume here that rates are time-independent.

As such, birth and death rates, $B_N$ and $D_N$, are regarded as probabilities per unit time that a birth occurs
(hence the populations moves from $N$ to $N+1$ individuals) or, correspondingly, that a death event occurs
(and then the system changes from having $N$ to $N-1$ individuals). Consider an infinitesimal time interval
$\Delta t$. Then birth and death rates satisfy
\begin{equation}
\begin{aligned}
&\text{Pr}\{N+1,t+\Delta t\,|\,N,t\}=B_N\Delta t,\\
&\text{Pr}\{N-1,t+\Delta t\,|\,N,t\}=D_N\Delta t,\\
\end{aligned}
\end{equation}
where $\text{Pr}\{N+1,t+\Delta t\,|\,N,t\}$ is the conditional probability that the system undergoes a birth event
at time $t+\Delta t$ given that there were $N$ individuals at time $t$. Multiple births and deaths are usually
ignored in the limit $\Delta t\to 0$ because their probability would be proportional to $(\Delta t)^2$.

If the population is formed by $N$ individuals at time $t$, at time $t+\Delta t$ the population can be composed
by: (a) $N+1$ individuals with probability $B_N\Delta t$; (b) $N-1$ individuals with probability $D_N\Delta t$;
(c) $N$ individuals with probability $1-\{B_N+D_N\}\Delta t$. Therefore, the conditional probabilities that end up
with a population formed exactly by $N$ individuals are:
\begin{equation}
\begin{aligned}
&\text{Pr}\{N,t+\Delta t\,|\,N-1,t\}=B_{N-1}\Delta t,\\
&\text{Pr}\{N,t+\Delta t\,|\,N+1,t\}=D_{N+1}\Delta t,\\
&\text{Pr}\{N,t+\Delta t\,|\,N,t\}=1-(B_N+D_N)\Delta t.
\end{aligned}
\end{equation}
Thus, using the theorem of total probability we can write an expression for the probability of observing $N$
individuals at time $t+\Delta t$ in terms of the probabilities at time $t$:
\begin{equation}\label{eq:total}
P_N(t+\Delta t)=P_{N-1}(t)B_{N-1}\Delta t+P_{N+1}(t)D_{N+1}\Delta t+P_N(t)[1-(B_N+D_N)\Delta t].
\end{equation}
Here we are assuming a Markovian hypothesis, according to which the state of the system at a given time
is conditioned only by the potential states at previous times but infinitely close to the current time. Now we
subtract $P_N(t)$ from both sides of~\eqref{eq:total} and take the limit $\Delta t\to 0$ to get the so-called
\emph{master equation}:
\begin{equation}\label{eq:master}
P'_N(t)=B_{N-1}P_{N-1}(t)+D_{N+1}P_{N+1}(t)-(B_N+D_N)P_N(t).
\end{equation}
The system is therefore fully described by a coupled system of infinitely many ordinary differential equations, given by
equation~\eqref{eq:master} for $N\ge 1$. For $N=0$, since the number of individuals has to remain non-negative,
we have to impose that $D_0=0$ and $B_{-1}=0$. In this case, the corresponding equation reduces to
\begin{equation}\label{eq:master1}
P'_0(t)=D_1P_1(t)-B_0P_0(t).
\end{equation}
Therefore, the central problem in the theory of stochastic birth-death processes for a population of individuals
is to solve the master equation~\eqref{eq:master}--\eqref{eq:master1} as a problem of initial value. To be precise,
if we know the probability distribution $P_N(t_0)$ at some initial time $t_0$, then the system of differential equations
allows for obtaining the probability distribution at any time $t$, $P_N(t)$. In what follows we will assume, without loss
of generality, that there are $N_0$ individuals at time $t=0$, $N_0\in\mathbb{N}\cup\{0\}$. Then the initial probability
distribution is precisely equal to $P_N(0)=\delta_{N_0N}$, where $\delta_{ij}$ stands for the usual Kronecker delta
symbol.

To make the master equation tractable, in some cases it can be transformed into a partial differential equation
by using a generating function defined as
\begin{equation}\label{eq:genfun}
g(z,t)=\sum_{N=0}^{\infty}P_N(t)z^N,
\end{equation}
i.e., the discrete variable $N$ is transformed into to continuous variable $z$, $0\le z\le 1$. In this contribution we are
interested in the conditions under which we can find a closed-form, analytical solution for the generating function
$g(z,t)$. Knowledge of the generating function allows the calculation of important properties of the stochastic
processes ---for example, the average number of individuals, the variance of the population, or even the probability of extinction
of the system at time $t$, $P_0(t)=g(0,t)$. We will  follow a Laplace transform strategy to solve the corresponding partial
differential equation, and we will analyze ecologically meaningful examples for the birth and death rates, which yield
useful insights about the integrability of these kind of systems by considering one fully integrable case and a fully
non-integrable case (the sense in which we use the term `integrable' will be precisely defined in Section~\ref{sec:DGT}).

To be more specific, from now on we focus on the birth-death process defined by the rates (as mentioned, the term `rate' stands
for probability per unit time) $B_N=\beta N^{b}$ and $D_N=\delta N^{d}$, where $b$, $d$ are natural exponents 
and $\beta$, $\delta$ are positive real numbers. Usually death rates are taken as a quadratic function ($d=2$) since it is commonly
assumed that two individuals compete with each other in death events, whereas birth processes (asexual reproduction) are described as
linear functions of $N$ ($b=1$, i.e., the probability of a birth event is proportional to the number of individuals in
the population). In this contribution we will consider two combinations of exponents: $(b,d)\in\{(1,1),(1,2)\}$. For
the combination $(b,d)=(1,1)$, the master equation is equivalent to the following PDE (see Section~\ref{sec:11}):
\begin{equation}\label{eq:PDE11}
\frac{\partial g(z,t)}{\partial t}=(1-z)(\delta -\beta z)\frac{\partial g(z,t)}{\partial z}
\end{equation}
with boundary conditions
\begin{equation}\label{eq:boundPDE}
g(z,0)=z^{N_0}, \,\,g(1,t)=1.
\end{equation} This equation turns out to be integrable (in a sense
specified below) via a Laplace transform technique.
Roughly speaking, integrability here means solvability in  {\it  closed form}.

If death rates are quadratic functions of $N$, in Section~\ref{sec:12} we show that the generating function satisfies the following PDE,
\begin{equation}\label{eq:PDE12}
\frac{\partial g(z,t)}{\partial t}=(1-z)\left[(\delta -\beta z)\frac{\partial g(z,t)}{\partial z}+\delta z \frac{\partial^2 g(z,t)}{\partial z^2}\right].
\end{equation}
We will refer to this PDE as the $(b,d)=(1,2)$ case. As before, the generating function has to satisfy the conditions \eqref{eq:boundPDE}.
This case of quadratic death rates, which is the more relevant one in biological
terms, remains as non-integrable, as we will show in Section~\ref{sec:12} using results from Differential Galois Theory.

\begin{proposition} \label{prop 11} The  PDE given by equation~\eqref{eq:PDE11} with  boundary conditions \eqref{eq:boundPDE} is integrable and the solution is given by:
\begin{enumerate}
\item For $\beta\neq\delta$,
\begin{equation}\label{eq:genfun11}
g(z,t)=\left[\frac{\delta-\beta z-(1-z)\delta e^{(\beta-\delta)t}}{\delta-\beta z-(1-z)\beta e^{(\beta-\delta)t}}\right]^{N_0}.
\end{equation}
\item For $\beta=\delta$,
\begin{equation}\label{genebeq}
 g(z,t)=\left[\frac {\delta(1-z)t+z}{\delta(1-z)t+1}\right]^{N_0}.
\end{equation}
\end{enumerate}
\end{proposition}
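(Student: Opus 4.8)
The plan is to carry out the Laplace transform strategy announced above. Setting $\widehat g(z,s)=\int_{0}^{\infty}e^{-st}g(z,t)\,dt$ and using $\mathcal{L}[\partial_t g]=s\widehat g-g(z,0)=s\widehat g-z^{N_0}$, equation~\eqref{eq:PDE11} turns into a first-order linear ODE in $z$ with $s$ playing the role of a parameter,
\[
(1-z)(\delta-\beta z)\,\frac{d\widehat g}{dz}-s\,\widehat g=-z^{N_0},
\]
while the boundary condition $g(1,t)=1$ becomes $\widehat g(1,s)=1/s$. The first step is therefore to solve this ODE in closed form, distinguishing the two cases $\beta\neq\delta$ and $\beta=\delta$.

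Second, I would integrate the associated homogeneous equation by separation of variables. When $\beta\neq\delta$, partial fractions give $\dfrac{s}{(1-z)(\delta-\beta z)}=\dfrac{s}{\delta-\beta}\Bigl(\dfrac{1}{1-z}-\dfrac{\beta}{\delta-\beta z}\Bigr)$, so the homogeneous solution is a constant multiple of $\bigl(\tfrac{\delta-\beta z}{1-z}\bigr)^{s/(\delta-\beta)}$; when $\beta=\delta$ the coefficient has a double pole at $z=1$ and the homogeneous solution is a constant multiple of $\exp\!\bigl(\tfrac{s}{\delta(1-z)}\bigr)$. A particular solution is then produced by variation of parameters (an explicit $z$-integral of $z^{N_0}$ against the reciprocal of the homogeneous factor), and the remaining constant is fixed by imposing $\widehat g(1,s)=1/s$, equivalently by requiring $\widehat g(\cdot,s)$ to be the transform of a bounded probability generating function.

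Third, I would invert the Laplace transform. The homogeneous factor has the shape $e^{s\tau(z)}$ with $\tau(z)=\tfrac{1}{\delta-\beta}\ln\tfrac{\delta-\beta z}{1-z}$ (respectively $e^{s\sigma(z)}$ with $\sigma(z)=\tfrac{1}{\delta(1-z)}$), so under $\mathcal{L}^{-1}$ it acts by translating the time variable; combining this with the convolution coming from the variation-of-parameters integral, and inverting the substitution $w=\tfrac{\delta-\beta z}{1-z}$ via $z=\tfrac{w-\delta}{w-\beta}$, one is led to the two formulas \eqref{eq:genfun11} and \eqref{genebeq}. Since an explicit candidate is then in hand, the cleanest rigorous conclusion is to verify it directly: plugging \eqref{eq:genfun11} into~\eqref{eq:PDE11} is a routine logarithmic-derivative computation, while the conditions in~\eqref{eq:boundPDE} are immediate --- at $z=1$ both numerator and denominator of the bracket equal $\delta-\beta$, so $g(1,t)=1$, and at $t=0$ the bracket reduces to $\tfrac{\delta-\beta z-\delta(1-z)}{\delta-\beta z-\beta(1-z)}=\tfrac{(\delta-\beta)z}{\delta-\beta}=z$, so $g(z,0)=z^{N_0}$; the check for \eqref{genebeq} is analogous.

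I expect the inversion to be the main obstacle: one has to be careful about the region of analyticity and the branch cuts of $\widehat g$ as a function of $s$, and about convergence of the variation-of-parameters integral near the singular points $z=1$ and $z=\delta/\beta$. An approach that avoids the transform altogether is the method of characteristics, since \eqref{eq:PDE11} is a linear first-order PDE: $g$ is constant along the integral curves of $\tfrac{dz}{dt}=-(1-z)(\delta-\beta z)$, whose first integral is $\tfrac{\delta-\beta z}{1-z}\,e^{(\delta-\beta)t}$ when $\beta\neq\delta$ and $\tfrac{1}{1-z}+\delta t$ when $\beta=\delta$; imposing $g(z,0)=z^{N_0}$ then reproduces \eqref{eq:genfun11}--\eqref{genebeq}, and $g(1,t)=1$ follows because $z=1$ is an equilibrium of the characteristic flow, hence a characteristic curve along which $g$ keeps its initial value $1$.
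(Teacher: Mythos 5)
Your proposal is correct and follows essentially the same route as the paper: Laplace transform in $t$, separation of variables for the homogeneous first-order ODE (with the two cases $\beta\neq\delta$ and $\beta=\delta$ giving the power-law and exponential homogeneous factors respectively), variation of parameters with the constant fixed by regularity at $z=1$ (the paper's condition $C(1)=0$ is equivalent to your $\widehat g(1,s)=1/s$), and inversion by the change of variables $w=\tfrac{\delta-\beta z}{1-z}\cdot\tfrac{1-u}{\delta-\beta u}$, $w=e^{(\beta-\delta)t}$ that exhibits the result as an explicit Laplace transform. Your closing direct verification and the method-of-characteristics alternative are both sound and consistent with the paper, which itself notes that this first-order case could be handled by characteristics but deliberately uses the transform to illustrate the methodology applied later to the $(1,2)$ case.
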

\begin{proposition} \label{prop 12} The  PDE given by equation~\eqref{eq:PDE12} is non-integrable.\end{proposition}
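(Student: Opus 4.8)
The plan is to follow the same Laplace-transform strategy that worked for the $(b,d)=(1,1)$ case, apply it to equation~\eqref{eq:PDE12}, and show that the resulting ordinary differential equation in the variable $z$ has a differential Galois group that is \emph{not} virtually solvable (equivalently, the equation is not solvable in closed form by quadratures), which is precisely how `non-integrability' was defined in Section~\ref{sec:DGT}. First I would take the Laplace transform $\widehat g(z,s)=\int_0^\infty e^{-st}g(z,t)\,dt$ of \eqref{eq:PDE12}; using the boundary condition $g(z,0)=z^{N_0}$ from \eqref{eq:boundPDE}, the time derivative turns into $s\widehat g(z,s)-z^{N_0}$, and the PDE becomes a \emph{linear second-order ODE in $z$} with parameter $s$:
\begin{equation*}
\delta z(1-z)\,\widehat g_{zz}+(1-z)(\delta-\beta z)\,\widehat g_{z}-s\,\widehat g=-z^{N_0}.
\end{equation*}
The associated homogeneous equation is the object whose integrability must be decided.

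Next I would analyze this homogeneous ODE. It has singular points at $z=0$, $z=1$, $z=\delta/\beta$ (when $\beta\neq\delta$) and $z=\infty$, so generically it is a Heun-type equation rather than a hypergeometric one. The standard route is to remove the first-order term by a change of dependent variable $\widehat g = e^{-\frac12\int (P_1/P_2)}\,y$ (where $P_2,P_1$ are the coefficients of $\widehat g_{zz},\widehat g_z$), reducing to the normal form $y''=r(z)\,y$ with a rational $r$, and then to apply the Kovacic algorithm. The key step is to show that for generic (indeed, for all relevant) values of $s$ none of the three cases of the Kovacic algorithm can succeed: one checks the order of the poles of $r$ and the behaviour at infinity against the necessary arithmetic conditions (the local exponent differences at the finite singularities and at infinity must lie in the appropriate rational sets), and exhibits that these conditions fail — typically because the exponent difference at infinity, or the combination of exponents forced by $s$, is not rational, or does not satisfy the Case 1/Case 2/Case 3 congruences. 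This forces the differential Galois group to be $\mathrm{SL}(2,\mathbb{C})$, which is not virtually solvable, so the homogeneous equation — and hence the full PDE \eqref{eq:PDE12} — is non-integrable.

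To make the argument clean it is convenient to treat the borderline case $\beta=\delta$ separately (then the singular points $z=1$ and $z=\delta/\beta$ coalesce, lowering the count, but one still gets a confluent-Heun-type equation), and to note that specializing $s$ cannot help: integrability of the PDE would require a closed-form solution for a full interval of $s$-values (so that the Laplace transform can be inverted), and the Galois obstruction is robust on an open set of parameters. I expect the main obstacle to be the bookkeeping in the Kovacic case analysis: one must carefully compute the local exponents at $0$, $1$, $\delta/\beta$ and $\infty$ as functions of $\beta,\delta,s$, verify that the partial-fraction/Laurent data of $r(z)$ violate \emph{every} branch of the algorithm simultaneously, and confirm that the non-integrability does not evaporate for isolated special parameter choices other than the linear-rate degeneration $(b,d)=(1,1)$ already covered by Proposition~\ref{prop 11}. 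Once that case check is complete, non-integrability of \eqref{eq:PDE12} follows immediately.
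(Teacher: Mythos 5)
Your overall strategy coincides with the paper's: Laplace-transform in $t$, reduce the resulting second-order ODE \eqref{eq:ODE12} to the invariant normal form $H''=rH$, and run Kovacic's algorithm to exclude Liouvillian solutions for every value of $s$. However, as written the proposal has two concrete problems. First, the singularity count is wrong: in \eqref{eq:ODE12} the factor $(1-z)$ multiplies the whole second-order operator, so after dividing by the leading coefficient $\delta z(1-z)$ the first-order coefficient is $(\delta-\beta z)/(\delta z)$ and the zeroth-order coefficient is $-s/(\delta z(1-z))$; the finite singular points are therefore only $z=0$ (a pole of order $2$ of $r$) and $z=1$ (order $1$), with an irregular point at $\infty$ where $\circ(r_\infty)=0$. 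The point $z=\delta/\beta$ is \emph{not} singular here --- that occurs only in the second-order reformulation of the $(1,1)$ case --- and consequently there is no coalescence of singularities to handle when $\beta=\delta$. Since every branch of Kovacic's algorithm is driven by exactly these local data, carrying out your plan with the singularity set $\{0,1,\delta/\beta,\infty\}$ would not produce a valid case analysis.

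Second, the mechanism you anticipate (irrational exponent differences, or congruences ``forced by $s$'') is not what actually obstructs integrability, and the decisive computation is absent from the proposal. In the paper's proof the relevant local quantities are rational and independent of $s$, namely $\alpha_0^{\pm}=\frac{1}{2}$, $\alpha_1^{\pm}=1$ and $\alpha_\infty^{\pm}=\mp\frac{1}{2}$, and the obstruction is purely arithmetic: every admissible degree $m=\alpha_\infty^{\varepsilon(\infty)}-\alpha_0^{\varepsilon(0)}-\alpha_1^{\varepsilon(1)}$ in Case 1 equals $-1$ or $-2$, the unique Case 2 candidate $\frac{1}{2}(e_\infty-e_0-e_1)=\frac{1}{2}(0-2-4)$ is negative, and Case 3 is excluded outright because $\circ(r_\infty)=0<2$. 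This yields non-integrability for \emph{all} $s$, not merely for generic $s$, so your concern about isolated special parameter values does not arise. Until these three verifications are actually performed with the correct local data, the proposal is a reasonable plan but not yet a proof.
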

Proposition~\ref{prop 12} is a non-integrability result and  tell us that {\it  any search for a closed-form, analytical solution for equation~\eqref{eq:PDE12} is doomed to failure.}

\section{Differential Galois Theory}
\label{sec:DGT}

Differential Galois Theory, also known as Picard-Vessiot theory, is the Galois theory of linear differential
equations. In classical Galois theory, the main object is a
group of permutations of the polynomial's roots, whereas in the Picard-Vessiot
theory it is a linear algebraic group. For polynomial equations we
look for solutions in terms of radicals. According to classical Galois
theory, this form of the solution will exist whenever the Galois group is a solvable group.
An analogous situation holds for linear homogeneous differential
equations.

As a notational convention we will use $\partial_x:=\frac{\partial}{\partial x}$ (also $^{\prime}:=\frac{\partial}{\partial x})$ throughout this section.

\subsection{Definitions and Known Results}
The following theoretical background can be found in the references \cite{crha,mo,vasi}. We recall that although differential Galois theory is more 
general, here we just summarize results from theory for second order differential equations.

\begin{definition}[Differential Fields]\label{defdiff}
Let $K$ (depending on $x$) be a commutative field of
characteristic zero, and $\partial_x$ a derivation, that is, a map
$\partial_x : K\rightarrow K$ satisfying $\partial_x (a + b) =
\partial_x a + \partial_x b$ and $\partial_x(ab) = \partial_x a \cdot b + a
\cdot\partial_x b$ for all $a,b\in K$. By $\mathcal C$ we denote
the field of constants of $K$, $$\mathcal C = \{c\in K\, |\,\,
c' = 0\},$$ which is also of characteristic zero and will be assumed algebraically closed. In this terms,
we say that $K$ is a {\it{differential field}} with the derivation
$\partial_x=\,\,^\prime$.
\end{definition}

Up to special considerations, we
analyze second order linear homogeneous differential equations,
that is, equations in the form
\begin{equation}\label{soldeq}
\mathcal L:= y''+ay'+by=0,\quad a,b\in K.
\end{equation}

\begin{definition}[Picard-Vessiot Extension] Suppose that $y_1, y_2$ is a basis of solutions of $\mathcal L$ given in equation \eqref{soldeq}, i.e., $y_1, y_2$ are linearly
independent over $K$ and every solution is a linear combination
over $\mathcal{C}$ of these two. Let $L= K\langle y_1, y_2 \rangle
=K(y_1, y_2, y_1', y_2')$ the differential
extension of $K$ such that $\mathcal C$ is the field of constants
for $K$ and $L$. In this terms, we say that $L$, the smallest
differential field containing $K$ and $\{y_{1},y_{2}\}$, is the
\textit{Picard-Vessiot extension} of $K$ for $\mathcal L$.
\end{definition}

\begin{definition}[Differential Galois Groups]
Assume $K$, $L$ and $\mathcal L$ as in the previous definition. The
group of all differential automorphisms (automorphisms that
commute with derivation) of $L$ over $K$ is called the
{\it{differential Galois group}} of $L$ over $K$ and is denoted by
${\rm Gal}(L/K)$. This means that for $\sigma\in
\mathrm{Gal}(L/K)$, $\sigma(a)=(\sigma(a))'$
for all $a\in L$ and for all $a\in K,$ $\sigma(a)=a$.
\end{definition}
Assume that $\{y_1,y_2\}$ is a fundamental system (basis) of solutions
of $\mathcal L$. If $\sigma \in
\mathrm{Gal}(L/K)$ then $\{\sigma y_1, \sigma y_2\}$ is another
fundamental system of $\mathcal L$. Hence there exists a matrix

$$A_\sigma=
\begin{pmatrix}
a & b\\
c & d
\end{pmatrix}
\in \mathrm{GL}(2,\mathbb{C}),$$ such that
$$\sigma
\begin{pmatrix}
y_{1}\\
y_{2}
\end{pmatrix}
=
\begin{pmatrix}
\sigma (y_{1})\\
\sigma (y_{2})
\end{pmatrix}
=\begin{pmatrix} y_{1}& y_{2}
\end{pmatrix}A_\sigma.$$ 
In a natural way, we can extend this to systems:
$$\sigma
\begin{pmatrix}
y_{1}&y_2\\
y_1'&y_{2}'
\end{pmatrix}
=
\begin{pmatrix}
\sigma (y_{1})&\sigma (y_2)\\
\sigma (y_1')&\sigma (y_{2}')
\end{pmatrix}
=\begin{pmatrix} y_{1}& y_{2}\\y_1'&y_2'
\end{pmatrix}A_\sigma.$$

This defines a faithful representation $\mathrm{Gal}(L/K)\to
\mathrm{GL}(2,\mathbb{C})$ and it is possible to consider
$\mathrm{Gal}(L/K)$ as a subgroup of $\mathrm{GL}(2,\mathbb{C})$.
It depends on the choice of the fundamental system $\{y_1,y_2\}$,
but only up to conjugacy.

One of the fundamental results of the Picard-Vessiot theory is the
following theorem (see \cite{ka,kol}).

\begin{theorem}  The differential Galois group $\mathrm{Gal}(L/K)$ is an
algebraic subgroup of $\mathrm{GL}(2,\mathbb{C})$.
\end{theorem}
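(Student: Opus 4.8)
The plan is to realize $\mathrm{Gal}(L/K)$ as the stabilizer, under a linear action of $\mathrm{GL}(2,\mathbb{C})$, of the ideal of algebraic relations among a fundamental system, and then to check that such a stabilizer is cut out by polynomial equations in the matrix entries. I would begin by assembling the fundamental system into the matrix
\[
\Phi=\begin{pmatrix} y_1 & y_2\\ y_1' & y_2'\end{pmatrix},
\]
whose determinant is the Wronskian $W=y_1y_2'-y_2y_1'$. Differentiating the relations $y_i''=-ay_i'-by_i$ shows $\Phi'=M\Phi$ with companion matrix $M=\bigl(\begin{smallmatrix}0&1\\-b&-a\end{smallmatrix}\bigr)\in\mathrm{Mat}(2,K)$, while Abel's identity gives $W'=-aW$, so $W\neq 0$ and $W^{-1}$ lies in the differential ring $R:=K[y_1,y_2,y_1',y_2',W^{-1}]$, whose fraction field is $L$.

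Next I would present $R$ concretely. Introducing indeterminates $X_{ij}$ ($i,j\in\{1,2\}$) and localizing at $\det X$, the $K$-algebra map $\phi\colon U:=K[X_{ij}][1/\det X]\to R$, $X_{ij}\mapsto \Phi_{ij}$, is surjective with kernel a prime ideal $I$. The group $\mathrm{GL}(2,\mathbb{C})$ acts on $U$ by right translation $\rho_A\colon X\mapsto XA$ (so $1/\det X\mapsto 1/(\det X\,\det A)$), giving $K$-algebra automorphisms. Equipping $U$ with the derivation $\partial X:=MX$ and $\partial(1/\det X)=a/\det X$, the crucial point is that $\rho_A$ commutes with $\partial$ for every constant $A$, because $M$ has entries in $K$ and acts by left multiplication while $\rho_A$ acts by right multiplication. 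Consequently an element $A$ satisfies $\rho_A(I)=I$ if and only if $\rho_A$ descends to a differential $K$-automorphism of $R$, hence of $L$; matching this with the faithful representation $\sigma\mapsto A_\sigma$ already constructed (each $\sigma$ sends $\Phi\mapsto\Phi A_\sigma$ and thus preserves $I=\ker\phi$), I obtain the identification
\[
\mathrm{Gal}(L/K)\;\cong\;\{A\in\mathrm{GL}(2,\mathbb{C})\,:\,\rho_A(I)=I\}.
\]

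It then remains to prove this stabilizer is Zariski closed. I would filter $U$ by total degree, $U=\bigcup_n U_{\le n}$, noting each $U_{\le n}$ is a finite-dimensional $K$-space preserved by every $\rho_A$, and that in the monomial basis the matrix of $\rho_A|_{U_{\le n}}$ has entries that are regular functions on $\mathrm{GL}(2,\mathbb{C})$ (polynomials in the $A_{ij}$ and in $1/\det A$). Writing $I_n:=I\cap U_{\le n}$ and choosing a $K$-basis of $U_{\le n}$ adapted to $I_n$, the condition $\rho_A(I_n)\subseteq I_n$ becomes the vanishing of finitely many $K$-linear combinations of those regular functions. The subtle step, and the one I expect to be the main obstacle, is to convert such a $K$-linear relation into honest polynomial equations over $\mathbb{C}$: grouping the coefficients over a $\mathbb{C}$-basis of $K$ and using that the field of constants $\mathbb{C}$ is a subfield of $K$, each relation $\sum_\alpha\kappa_\alpha\,p_\alpha(A)=0$ with $\mathbb{C}$-independent $\kappa_\alpha\in K$ and $p_\alpha\in\mathbb{C}[\mathrm{GL}(2,\mathbb{C})]$ forces every $p_\alpha(A)=0$. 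Thus $\{A:\rho_A(I_n)\subseteq I_n\}$ is closed for each $n$, and since $\rho_A(I)\subseteq I\iff\rho_A(I_n)\subseteq I_n$ for all $n$, the Galois group is an intersection of Zariski-closed sets. Being also a subgroup of $\mathrm{GL}(2,\mathbb{C})$, it is therefore an algebraic subgroup, as claimed.
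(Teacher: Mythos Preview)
The paper does not prove this theorem; it merely states it as a foundational result of Picard--Vessiot theory and refers the reader to Kaplansky and Kolchin. There is therefore no in-paper proof to compare against. Your sketch follows the standard argument one finds in those references: present the Picard--Vessiot ring as a quotient $U/I$ of the coordinate ring of $\mathrm{GL}_2$ over $K$, identify $\mathrm{Gal}(L/K)$ with the stabilizer of $I$ under right translation, and then show that stabilizing $I$ is a Zariski-closed condition by reducing to finitely many polynomial equations in the matrix entries via a filtration by finite-dimensional $K$-subspaces. The overall strategy is correct and is essentially the classical one.

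Two points deserve tightening. First, the phrase ``filter $U$ by total degree'' is imprecise because $U=K[X_{ij}][1/\det X]$ is a localization, not a polynomial ring; you should specify an exhaustive family of finite-dimensional $\rho_A$-stable $K$-subspaces, for instance the $K$-span of the monomials $X^{\alpha}(\det X)^{-m}$ with $|\alpha|+m\le n$, and check explicitly that $\rho_A$ preserves each such span with matrix entries in $\mathbb{C}[\mathrm{GL}_2]$. Second, you identify the Galois group with $\{A:\rho_A(I)=I\}$ but then prove closedness of the a priori larger set $\{A:\rho_A(I)\subseteq I\}$. To close this gap, observe that the latter set is Zariski closed by your argument, that inversion on $\mathrm{GL}_2$ is a morphism, and that $\{A:\rho_A(I)=I\}$ is the intersection of $\{A:\rho_A(I)\subseteq I\}$ with its image under inversion; hence it is closed as well. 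With these two clarifications your argument is complete.
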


\begin{definition}[Integrability]\label{integrability} Consider the linear differential equation
$\mathcal L$ such as in equation \eqref{soldeq}. We say that
$\mathcal L$ is \textit{integrable} if the Picard-Vessiot
extension $L\supset K$ is obtained as a tower of differential
fields $K=L_0\subset L_1\subset\cdots\subset L_m=L$ such that
$L_i=L_{i-1}(\eta)$ for $i=1,\ldots,m$, where either
\begin{enumerate}
\item $\eta$ is {\emph{algebraic}} over $L_{i-1}$, that is $\eta$ satisfies a
polynomial equation with coefficients in $L_{i-1}$.
\item $\eta$ is {\emph{primitive}} over $L_{i-1}$, that is $\partial_x\eta \in L_{i-1}$.
\item $\eta$ is {\emph{exponential}} over $L_{i-1}$, that is $\partial_x\eta /\eta \in L_{i-1}$.
\end{enumerate}
\end{definition}

We remark that the usual terminology in differential algebra for
integrable equations is that the corresponding Picard-Vessiot
extensions are called \textit{Liouvillian}.

\begin{theorem}[Kolchin]\label{LK}
The equation $\mathcal L$ given in \eqref{soldeq} is integrable if
and only if $(\mathrm{Gal}(L/K))^0$ is solvable.
\end{theorem}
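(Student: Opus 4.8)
The plan is to prove Kolchin's criterion by translating, in both directions, between the algebraic structure of the differential Galois group $G:=\mathrm{Gal}(L/K)\subseteq\mathrm{GL}(2,\mathbb{C})$ and the Liouvillian structure of the solutions of $\mathcal L$ in \eqref{soldeq}. The two external tools I would invoke are the fundamental theorem of Picard--Vessiot theory (the Galois correspondence between intermediate differential fields $K\subseteq F\subseteq L$ and Zariski-closed subgroups of $G$, together with the fact that normal closed subgroups correspond to Picard--Vessiot subextensions; see \cite{vasi,mo}) and the Lie--Kolchin theorem (every connected solvable linear algebraic group is conjugate to a group of upper-triangular matrices). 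I will use repeatedly that $G^0$ is a normal subgroup of \emph{finite} index, so that its fixed field $L^{G^0}$ is a finite algebraic extension of $K$ with $[L^{G^0}:K]=[G:G^0]$, and that the field of constants $\mathcal C$ is algebraically closed and unchanged along the extensions, as stipulated in Definition~\ref{defdiff}.

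For the sufficiency direction ($G^0$ solvable $\Rightarrow$ $\mathcal L$ integrable) I would argue as follows. Since $G^0$ is connected and solvable, the Lie--Kolchin theorem produces a $G^0$-stable line in the two-dimensional solution space $V=\langle y_1,y_2\rangle_{\mathcal C}\subseteq L$; let $w\in V$ span it, so that $\sigma w=\chi(\sigma)\,w$ for a character $\chi\colon G^0\to\mathcal C^{\ast}$. Because every $\sigma$ commutes with the derivation and fixes constants, the logarithmic derivative $u:=w'/w$ satisfies $\sigma(u)=u$ for all $\sigma\in G^0$, so $u\in L^{G^0}$ and $u$ is algebraic over $K$. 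I would then exhibit the Liouvillian tower of Definition~\ref{integrability} explicitly: put $K_1:=K(u)$, an \emph{algebraic} extension; adjoin $w$ with $w'/w=u\in K_1$, an \emph{exponential} extension; and recover the second solution by reduction of order. Here the Wronskian $W=y_1y_2'-y_1'y_2$ obeys $W'=-aW$, so $W=\exp(-\!\int a)$ is exponential over $K$, while $y_2=w\int(W/w^2)$ adjoins the \emph{primitive} $\int(W/w^2)$. The resulting tower places $L$ inside a Liouvillian extension, so $\mathcal L$ is integrable.

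For the necessity direction ($\mathcal L$ integrable $\Rightarrow$ $G^0$ solvable) I would start from a Liouvillian tower $K=M_0\subset M_1\subset\cdots\subset M_r=M\supseteq L$, each step algebraic, primitive, or exponential, and first arrange (passing to a Picard--Vessiot closure and replacing algebraic steps by their normal closures if necessary, using that $\mathcal C$ is algebraically closed so no new constants appear) that $M/K$ is a Picard--Vessiot extension with group $H=\mathrm{Gal}(M/K)$ and a \emph{subnormal} series $H=H_0\supseteq H_1\supseteq\cdots\supseteq H_r=\{1\}$. By the Galois correspondence each factor $H_i/H_{i+1}$ is a closed subgroup of the additive group $\mathbb{G}_a=(\mathcal C,+)$ for a primitive step, of the multiplicative group $\mathbb{G}_m=(\mathcal C^{\ast},\cdot)$ for an exponential step, or a finite group for an algebraic step. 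Intersecting the series with $H^0$ and observing that the identity component is contained in every closed subgroup of finite index (so the finite factors collapse at the level of $H^0$), one is left with a series for $H^0$ whose factors are connected abelian groups; hence $H^0$ is solvable. Finally, since $L/K$ is Picard--Vessiot and $K\subseteq L\subseteq M$, restriction of automorphisms gives a surjective morphism of algebraic groups $H\to G$ with kernel $\mathrm{Gal}(M/L)$; the image of the connected solvable group $H^0$ equals $G^0$, which is therefore solvable.

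The step I expect to be the main obstacle is the bookkeeping in the necessity direction: justifying that the tower may be replaced by a Picard--Vessiot extension without destroying its Liouvillian character, and then identifying the Galois group of each elementary step with a subgroup of $\mathbb{G}_a$, $\mathbb{G}_m$, or a finite group. This is precisely where the standing hypothesis that $\mathcal C$ is algebraically closed and that the extensions add no new constants is indispensable, and where the remark that finite factors vanish upon passing to $H^0$ must be made carefully. The sufficiency direction is conceptually cleaner but rests entirely on the Lie--Kolchin theorem, which converts the abstract solvability of $G^0$ into the concrete eigenvector $w$; the remaining manipulations (reduction of order, the explicit form of $W$) are routine. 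I would therefore first set up the Galois correspondence and the $\mathbb{G}_a/\mathbb{G}_m/\text{finite}$ classification of elementary steps, then dispatch necessity, and finish with the Lie--Kolchin argument for sufficiency.
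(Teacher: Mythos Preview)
The paper does not prove this theorem at all: it is stated in Section~\ref{sec:DGT} as a known result of Picard--Vessiot theory, with implicit reference to the standard sources \cite{ka,kol,mo,vasi}, and no proof or sketch is given. Your proposal is therefore not comparable to any argument in the paper, but it is a correct outline of the classical proof (Lie--Kolchin for sufficiency, the $\mathbb{G}_a/\mathbb{G}_m/\text{finite}$ classification of elementary extensions for necessity); the caveats you flag about passing to a Picard--Vessiot closure and handling constants are exactly the points that require care, and they are dealt with in the references cited.
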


Consider the differential equation
\begin{equation}\label{LDE}
\mathcal L:=\zeta''=r\zeta,\quad r\in K.
\end{equation}

We recall that equation \eqref{LDE} can be obtained from equation
\eqref{soldeq} through the change of variable
\begin{equation}\label{redsec}
y=e^{-{1\over 2}\int_{}^{}a}\zeta,\quad r={a^2\over
4}+{a'\over 2}-b
\end{equation} and equation \eqref{LDE} is called the \textit{reduced
form} (also known as \textit{invariant normal form}) of equation~\eqref{soldeq}.

On the other hand, introducing the change of variable
$v=\partial_x\zeta/\zeta$ we get the associated Riccati equation
to equation (\ref{LDE}),
\begin{equation}\label{Riccatti}
\partial_xv=r-v^2,\quad v={\zeta'\over \zeta},
\end{equation}
where $r$ is given by equation (\ref{redsec}). Moreover, the Riccatti equation \eqref{Riccatti} has one algebraic solution
over the differential field $K$ if and only if the differential
equation \eqref{LDE} is integrable.

For $\mathcal L$ given by equation
 (\ref{LDE}), it is very well
known that ${\rm Gal}_K(\mathcal L)$ is
an algebraic subgroup of ${\rm SL}(2,\mathbb{C})$. The well known
classification of subgroups of $\mathrm{SL}(2,\mathbb{C})$  is the following.

\begin{theorem}\label{subgroups} Let $G$ be an algebraic subgroup of ${\rm SL}(2,\mathbb{C})$.
Then, up to conjugation, one of the following cases occurs.
\begin{enumerate}
\item $G\subseteq \mathbb{B}$ and then $G$ is reducible and
triangularizable.
\item $G\nsubseteq\mathbb{B}$, $G\subseteq \mathbb{D}_\infty$ and then $G$ is imprimitive.
\item $G\in\{A_4^{\mathrm{SL}_2},S_4^{\mathrm{SL}_2},A_5^{\mathrm{SL}_2}\}$ and then $G$ is primitive (finite)
\item $G = {\rm SL}(2,\mathbb{C})$ and then $G$ is primitive (infinite).
\end{enumerate}
\end{theorem}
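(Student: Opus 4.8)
The plan is to classify $G$ according to how it acts on the standard two-dimensional representation $V=\mathbb{C}^2$, splitting the argument into the mutually exclusive possibilities reducible/irreducible, and within the latter imprimitive/primitive; these four dichotomies will match the four cases of the statement. First I would treat the reducible case: if $G$ leaves invariant a line $\ell\subset V$, then choosing a basis whose first vector spans $\ell$ makes every element of $G$ upper triangular, so up to conjugation $G\subseteq\mathbb{B}$, giving case (1). From here on I assume $G$ is irreducible.

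Next I would isolate the imprimitive case. Suppose $G$ preserves an unordered pair of distinct lines $\{\ell_1,\ell_2\}$. After conjugating so that $\ell_1,\ell_2$ are the coordinate axes, each $g\in G$ is either diagonal (fixing both axes) or anti-diagonal (swapping them); hence $G$ is contained in the normalizer of the diagonal torus, $N(T)=\mathbb{D}_\infty$. Irreducibility forbids $G\subseteq\mathbb{B}$, so $G$ falls into case (2). It remains to handle the primitive case: $G$ irreducible and preserving no such pair.

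For the primitive case I would argue through the identity component $G^0$, using the classification of connected algebraic subgroups of $\mathrm{SL}(2,\mathbb{C})$: up to conjugation $G^0$ is one of $\{I\}$, the one-parameter unipotent group $U$, the torus $T$, the Borel $\mathbb{B}$, or all of $\mathrm{SL}(2,\mathbb{C})$ (this list follows by enumerating the subalgebras of the three-dimensional Lie algebra of $\mathrm{SL}(2,\mathbb{C})$ and exponentiating). Since $G$ normalizes $G^0$, I would then compute normalizers: $N(U)=\mathbb{B}$, $N(T)=\mathbb{D}_\infty$, and $\mathbb{B}$ is self-normalizing. Thus $G^0=U$ forces $G\subseteq\mathbb{B}$ (reducible), $G^0=T$ forces $G\subseteq\mathbb{D}_\infty$ (reducible or imprimitive), and $G^0=\mathbb{B}$ forces $G=\mathbb{B}$ (reducible); all three contradict primitivity. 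Hence either $G^0=\mathrm{SL}(2,\mathbb{C})$, whence $G=\mathrm{SL}(2,\mathbb{C})$ (case (4)), or $G^0=\{I\}$, i.e.\ $G$ is finite.

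Finally, for finite primitive $G$ I would pass to the image $\overline{G}\subseteq\mathrm{PSL}(2,\mathbb{C})$ under the projection $\mathrm{SL}(2,\mathbb{C})\to\mathrm{PSL}(2,\mathbb{C})$ and invoke the classical classification of finite subgroups of $\mathrm{PSL}(2,\mathbb{C})\cong\mathrm{PGL}(2,\mathbb{C})$ acting on $\mathbb{P}^1$: cyclic, dihedral, tetrahedral $A_4$, octahedral $S_4$, and icosahedral $A_5$. A cyclic image fixes a point of $\mathbb{P}^1$ and so makes $G$ reducible, while a dihedral image preserves a pair of points and so makes $G$ imprimitive; both are excluded. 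Therefore $\overline{G}\in\{A_4,S_4,A_5\}$ and $G$ is the corresponding binary polyhedral preimage $A_4^{\mathrm{SL}_2}$, $S_4^{\mathrm{SL}_2}$, or $A_5^{\mathrm{SL}_2}$, giving case (3). I expect the main obstacle to be precisely this last step: establishing the Klein classification of finite subgroups of $\mathrm{PSL}(2,\mathbb{C})$ (via the orbit-counting and fixed-point argument on $\mathbb{P}^1$ that reproduces the Platonic symmetry groups) and identifying their preimages; by comparison, the connected-subgroup list and the normalizer computations are routine linear algebra.
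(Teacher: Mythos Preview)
The paper does not actually prove this theorem: it is quoted as background (``the well known classification of subgroups of $\mathrm{SL}(2,\mathbb{C})$ is the following'') with the proof delegated to the cited references \cite{crha,mo,vasi,kovacic:1986}. So there is no in-paper proof to compare against.

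That said, your sketch is essentially the standard argument one finds in those references: split by reducibility, then by imprimitivity, then in the primitive case use the identity component $G^0$ together with the short list of connected subgroups of $\mathrm{SL}(2,\mathbb{C})$ and their normalizers to reduce to either $G=\mathrm{SL}(2,\mathbb{C})$ or $G$ finite, and finish with Klein's classification of finite subgroups of $\mathrm{PGL}(2,\mathbb{C})$. The logic is sound and the places you flag as the real work (the Klein classification, the connected-subgroup list) are indeed where the content lies. One small point you glide over in the final step: once $\overline{G}\in\{A_4,S_4,A_5\}$, you should justify that the preimage in $\mathrm{SL}(2,\mathbb{C})$ is the full binary group $A_4^{\mathrm{SL}_2}$, $S_4^{\mathrm{SL}_2}$, or $A_5^{\mathrm{SL}_2}$ rather than an isomorphic copy of $A_4$, $S_4$, or $A_5$ itself; this follows because none of these three groups has a faithful two-dimensional complex representation, so $-I$ must lie in $G$ and the preimage is the nonsplit double cover.
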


\subsection{Kovacic's Algorithm}
In 1986, Kovacic (\cite{kovacic:1986}) introduced an
algorithm to solve the differential equation (\ref{LDE}), where $K=\mathbb{C}(x)$, showing
that (\ref{LDE}) is integrable if and only if the solution of the
Riccati equation (\ref{Riccatti}) is a rational function (case 1),
is a root of polynomial of degree two (case 2) or is a root of
polynomial of degree 4, 6, or 12 (case 3). We leave the details of the algorithm to
Appendix A. We summarize here the main result by Kovacic as the following theorem.
\begin{theorem}[Kovacic]\label{Kov}
There are precisely four cases that can occur for equation~\eqref{LDE}:
\begin{description}
\item[Case 1] It has a solution of the form $e^{\int\omega}$ where $\omega\in\mathbb{C}(x)$.
\item[Case 2] It has a solution of the form $e^{\int\omega}$ where $\omega$ is algebraic over $\mathbb{C}(x)$ of degree 2, and case 1 does not hold.
\item[Case 3] All solutions of~\eqref{LDE} are algebraic over $\mathbb{C}(x)$ of degree 4, 6 or 12 and cases 1 and 2 do not hold.
\item[Case 4] The differential equation~\eqref{LDE} has no Liouvillian solution.
\end{description}
\end{theorem}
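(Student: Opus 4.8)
The plan is to read off the four cases directly from the dictionary between the differential Galois group $G:=\mathrm{Gal}(L/K)$ of equation~\eqref{LDE} and the form of its solutions, using the classification of algebraic subgroups of $\mathrm{SL}(2,\mathbb{C})$ in Theorem~\ref{subgroups} together with Kolchin's criterion (Theorem~\ref{LK}). First I would observe that, because the coefficient of $\zeta'$ in~\eqref{LDE} vanishes, the Wronskian of any fundamental system is constant, so $G\subseteq\mathrm{SL}(2,\mathbb{C})$. Next I would set up the key correspondence: a nonzero solution $\zeta$ determines a line $\mathbb{C}\zeta$ in the two-dimensional solution space $V$ and a solution $\omega=\zeta'/\zeta$ of the Riccati equation~\eqref{Riccatti}; conversely $\zeta=e^{\int\omega}$. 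Since any $\sigma\in G$ acts on $V$ and satisfies $\sigma(\omega)=(\sigma\zeta)'/(\sigma\zeta)$, the element $\omega$ is fixed exactly by the stabilizer of the line $\mathbb{C}\zeta$. Hence $[K(\omega):K]$ equals the size of the orbit of $\mathbb{C}\zeta$ under the induced action of $G$ on $\mathbb{P}(V)\cong\mathbb{P}^1$. The whole argument then reduces to computing these orbit sizes for the four families of Theorem~\ref{subgroups}.

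For the reducible family $G\subseteq\mathbb{B}$, there is a common eigenvector, i.e.\ a $G$-invariant line; its generator $\zeta$ satisfies $\sigma(\zeta)=\chi(\sigma)\zeta$ for a character $\chi$, so $\omega=\zeta'/\zeta$ is $G$-invariant and therefore lies in the fixed field $K=\mathbb{C}(x)$ (orbit size $1$). This yields a solution $e^{\int\omega}$ with $\omega\in\mathbb{C}(x)$, which is Case~1. For the imprimitive family $G\subseteq\mathbb{D}_\infty$, $G\nsubseteq\mathbb{B}$, the group $\mathbb{D}_\infty$ preserves a pair of lines $L_1,L_2$ (the diagonal torus fixes each and the anti-diagonal part swaps them); since $G\nsubseteq\mathbb{B}$ it cannot fix either line individually, so it permutes $\{L_1,L_2\}$ transitively. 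The associated $\omega_1,\omega_2$ thus form a single orbit of size $2$, whence each is algebraic over $K$ of degree exactly $2$ (a root of $T^2-(\omega_1+\omega_2)T+\omega_1\omega_2$, whose $G$-invariant coefficients lie in $K$), while Case~1 fails for want of an invariant line defined over $K$. This is Case~2.

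For the primitive finite family $G\in\{A_4^{\mathrm{SL}_2},S_4^{\mathrm{SL}_2},A_5^{\mathrm{SL}_2}\}$, finiteness of $G$ forces every $G$-orbit in $V$ to be finite, so every solution $\zeta$ is algebraic over $K$ (it satisfies its orbit polynomial, whose symmetric-function coefficients are $G$-invariant and hence lie in $K$). The projective image $\bar G\subseteq\mathrm{PSL}(2,\mathbb{C})$ is the tetrahedral, octahedral or icosahedral group, whose minimal orbit on $\mathbb{P}^1$ has size $4$, $6$ or $12$ respectively (the vertex orbits of the corresponding Platonic solid). This produces a Riccati solution $\omega$ of degree $4$, $6$ or $12$, with Cases~1 and~2 excluded by primitivity; this is Case~3. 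Finally, when $G=\mathrm{SL}(2,\mathbb{C})$ its identity component is all of $\mathrm{SL}(2,\mathbb{C})$, which is non-solvable, so by Kolchin's Theorem~\ref{LK} equation~\eqref{LDE} is not integrable and admits no Liouvillian solution; this is Case~4.

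To conclude I would invoke that Theorem~\ref{subgroups} lists, up to conjugation, every algebraic subgroup of $\mathrm{SL}(2,\mathbb{C})$ in four mutually exclusive families, so the four solution types above are exhaustive; the provisos ``case~1 does not hold'' and ``cases~1 and~2 do not hold'' then make the labels pairwise disjoint, yielding \emph{precisely} four cases. The main obstacle is the primitive case: one must establish, through the invariant theory of the finite subgroups of $\mathrm{PSL}(2,\mathbb{C})$ (equivalently the geometry of the regular polyhedra), that the minimal orbit sizes are exactly $4$, $6$, $12$, and confirm that the degree of the algebraic Riccati solution coincides with this orbit size. The remaining steps are comparatively mechanical once the Galois-group--orbit dictionary and the subgroup classification are in hand.
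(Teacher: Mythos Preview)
The paper does not prove Theorem~\ref{Kov}; it merely quotes it as Kovacic's main result, with a reference to \cite{kovacic:1986}, and then records the algorithmic procedure in Appendix~A. There is therefore no in-paper proof to compare your proposal against.

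On its own merits your outline is correct and is, in fact, the standard argument (essentially Kovacic's). The chain---$G\subseteq\mathrm{SL}(2,\mathbb{C})$ via the constant Wronskian, the Riccati dictionary $\omega=\zeta'/\zeta$, the identification of $[K(\omega):K]$ with the $G$-orbit size of $\mathbb{C}\zeta$ in $\mathbb{P}(V)$, and the case split coming from Theorem~\ref{subgroups}---is exactly the right skeleton. Two places deserve a sentence more in a full write-up. First, the equality $[K(\omega):K]=|G\cdot\mathbb{C}\zeta|$ rests on the differential Galois correspondence (the fixed field of the stabilizer of $\mathbb{C}\zeta$ is $K(\omega)$); you use this implicitly and should cite it. Second, the paper's phrasing of Case~3 (``all solutions of~\eqref{LDE} are algebraic of degree $4$, $6$ or $12$'') is loose: what is literally true, and what you prove, is that every $\zeta$ is algebraic (because $G$ is finite) while the minimal degree of a \emph{Riccati} solution $\omega$ is $4$, $6$, or $12$, matching the smallest $\bar G$-orbit on $\mathbb{P}^1$ for the tetrahedral, octahedral, and icosahedral groups. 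Your identification of those minimal orbit sizes with the vertex orbits is correct; just be aware that the degrees of the $\zeta$'s themselves (orbits in $V$, not $\mathbb{P}(V)$) are generally larger.
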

In the following sections we will apply the algorithm to equations~\eqref{eq:PDE11} and~\eqref{eq:PDE12} using a Laplace transform acting over the time variable.

\section{Linear rates: $(b,d)=(1,1)$}
\label{sec:11}

As mentioned in Section~\ref{sec:master}, we introduce the generating function $g(z,t)=\sum_{N=0}^{\infty}P_N(t)z^N$
to transform the discrete variable $N$ into the continuous variable $z$, $0\le z \le 1$. This converts the master equation
into a PDE: if we multiply both sides of the master equation~\eqref{eq:master} by $z^N$ and sum over $N$, we get
\begin{equation}\label{eq:BD}
\frac{\partial g(z,t)}{\partial t}=
\sum_{N=0}^{\infty}\left\{\beta(N-1)P_{N-1}(t)z^N+\delta(N+1)P_{N+1}(t)z^N-(\beta+\delta)NP_N(t)z^N\right\}.
\end{equation}
Recall that $P_N(t):=0$ for $N<0$. We now use the following straightforward identities:
\begin{itemize}
\item[(i)] $\frac{\partial g}{\partial z}=\sum_{N=0}^{\infty} (N+1)P_{N+1}(t)z^{N}=\sum_{N=1}^{\infty} NP_N(t)z^{N-1}$,
\item[(ii)] $z\frac{\partial g}{\partial z}=\sum_{N=1}^{\infty} NP_N(t)z^N$,
\item[(iii)] $z^2\frac{\partial g}{\partial z}=\sum_{N=1}^{\infty} (N-1)P_{N-1}(t)z^N$,
\item[(iv)] $\sum_{N=0}^{\infty} P_{N-1}(t)z^N=\sum_{N=1}^{\infty} P_{N-1}(t)z^N=\sum_{N=0}^{\infty} P_N(t)z^{N+1}=zg(z,t)$,
\end{itemize}
to get the first-order PDE~\eqref{eq:PDE11},
\begin{equation}
\frac{\partial g(z,t)}{\partial t}=\left\{\beta z^2-(\beta+\delta)z+\delta\right\}\frac{\partial g(z,t)}{\partial z}. \label{eq:PDE11b}
\end{equation}
The initial condition $P_N(0)=\delta_{N_0N}$ reduces to $g(z,0)=z^{N_0}$. Normalization of the probability distribution
at any time, $\sum_{N=0}^{\infty}P_N(t)=1$, implies that $g(1,t)=1$. Then we have to solve~\eqref{eq:PDE11b}
with the boundary conditions $g(z,0)=z^{N_0}$ and $g(1,t)=1$.

\subsection{Solution via a Laplace transform}
\label{sec:Lap11}

Although this case leads to a first-order PDE, which can be solved via the method of characteristics (see reference~\cite{kendall:1948} for the application of this method to 
the $(1,1)$ case in a more general setting in which the coefficients $\beta$ and $\delta$ are functions of time),
we calculate here the solution explicitly via the Laplace transform method to illustrate our methodology. Previously, however, it is convenient to clarify what kind of integrability we are considering in this work.

Let
\begin{equation}\frac {\partial g }{\partial t}=Mg \label{eq:PDE gen}
\end{equation}
be a partial differential equation, $M$ being a linear differential operator in the one-dimensional spatial variable $z$.  Then we can state the following
\medskip

\noindent{\bf Problem}. {\it Solve the PDE \eqref{eq:PDE gen} subject to suitable boundary conditions, including the initial Cauchy problem $g(z,0)=g_0(z)$. }

\medskip

Applying the Laplace transform {\it with respect to time} to \eqref{eq:PDE gen}, we obtain a family of linear ODE equations,
\begin{equation}
MG=sG+g_0(z),\label{eq:ODE gen}
\end{equation}
parameterized by the complex parameter $s$. We will say that equation \eqref{eq:ODE gen} is integrable if the homogeneous equation
$$MG=sG,$$
is integrable in the sense of Picard-Vessiot theory. This is natural, because from the general solution of the homogeneous equation we obtain the general solution of  \eqref{eq:ODE gen} by quadratures. Another approach to the integrability of \eqref{eq:ODE gen} is by transforming it to an homogeneous equation: later we will point out  an explicit example of this point of view. Of course, we are assuming here that the coefficients of $M$, and the function $g_0(z)$ belong to a suitable differential field $K$, for instance, the set of complex rational functions. Then

\begin{definition}\label{def:integ} We say that the equation \eqref{eq:PDE gen} is {\it integrable} if the family of linear   ODE equations \eqref{eq:ODE gen} is integrable in the sense of the Picard-Vessiot theory for almost any complex $s$.\end{definition}

We remark that despite it is usually assumed that the Laplace transformed function of the variable $s$ is defined in some half plane of the complex variable $s$, we are assuming here that this function can be prolongated analytically to other values of $s$.

Now focus on the PDE~\eqref{eq:PDE11}. It is clearly integrable, according to definition~\ref{def:integ}, because the associated linear ODE~\eqref{eq:ODE gen} is a first order ODE, being the coefficient field $K=\mathbb{C}(x)$ ---indeed, along the rest of the paper we will assume that the coefficient field is the set of rational functions $\mathbb{C}(x)$. Hence, we introduce the Laplace transform acting over the time dependence of the generating function as
\begin{equation}
G(z,s)=\int_0^{\infty} g(z,t)e^{-st}dt,
\end{equation}
which transforms~\eqref{eq:PDE11} into the following first-order ODE,
\begin{equation}\label{eq:ODE11}
(1-z)(\delta-\beta z)G'(z,s)=sG(z,s)-z^{N_0},
\end{equation}
where we regard $s$ as a parameter and primes denote derivatives with respect to $z$.  We now focus on
solving equation~\eqref{eq:ODE11} for arbitrary values of $s$. Note also that equation~\eqref{eq:ODE11} can be expressed as
\begin{equation}\label{eq:ODE11a}
G'(z,s)=f(z,s)G(z,s)+h(z)
\end{equation}
with 
\begin{equation}\label{eq:fh}
\begin{aligned}
&f(z,s)=\frac{s}{(1-z)(\delta-\beta z)},\\
&h(z)=-\frac{z^{N_0}}{(1-z)(\delta-\beta z)}.
\end{aligned}
\end{equation}
This form of the ODE will be convenient later in our computations. We observe that, in the homogeneous part of equation~\eqref{eq:ODE11a}, the point $z=\infty$ is an ordinary point. Moreover, when $\beta\neq \delta$ the points $z=1$ and $z=\frac{\delta}{\beta}$ are regular singular points, while when $\beta=\delta$ the point $z=1$ is a singularity of irregular type, see~\cite{almp2} for a detailed explanation about differential Galois theory of non-homogeneous equations.

From now on we shall consider these two cases ($\beta\ne \delta$ and $\beta=\delta$) separately. For $\beta\neq \delta$, the homogeneous equation can be solved
immediately,
\begin{equation}\label{eq:integ11}
\frac{G'}{G}=\frac{s}{(1-z)(\delta-\beta z)},\quad \ln G=\ln C+s \int \frac{dz}{(1-z)(\delta-\beta z)}.
\end{equation}
Assume that $\delta > \beta$ (the calculations for the $\delta<\beta$ case are simple extensions of those
provided here and are therefore left to Appendix B). Then equation~\eqref{eq:integ11} yields
\begin{equation}
\ln G=\ln C+\frac{s}{\delta-\beta}\int \left(\frac{1}{1-z}-\frac{\beta}{\delta-\beta z}\right)dz,
\end{equation}
i.e.,
\begin{equation}\label{eq:sol11}
G(z,s)=C\left(\frac{\delta-\beta z}{1-z}\right)^{\frac{s}{\delta-\beta}},
\end{equation}
where $C$ is an integration constant. Variation of the constant in equation~\eqref{eq:ODE11} yields a first-order ODE
for the unknown function $C(z)$,
\begin{equation}
(1-z)(\delta-\beta z)C'(z)\left(\frac{\delta-\beta z}{1-z}\right)^{\frac{s}{\delta-\beta}}=-z^{N_0},
\end{equation}
for which we impose $C(1)=0$ to avoid possible divergences in the generating function $g(z,t)$ at $z=1$ ---recall that $g(z,t)$ has to be an analytic function of $z$ because the probability distribution $P_N(t)$ is to be determined through a series expansion of $g(z,t)$ about $z=0$, see equation~\eqref{eq:genfun}. Therefore,
\begin{equation}\label{eq:C}
C(z)=\int_z^1 \frac{(1-u)^{\frac{s}{\delta-\beta}-1}}{(\delta-\beta u)^{\frac{s}{\delta-\beta}+1}}\,u^{N_0}du.
\end{equation}
Using~\eqref{eq:sol11} and~\eqref{eq:C} together, the Laplace transform of the generating function is expressed as
\begin{equation}
G(z,s)=\int_z^1 \left(\frac{\delta-\beta z}{1-z}\right)\left[\left(\frac{\delta-\beta z}{1-z}\right)\left(\frac{1-u}{\delta-\beta u}\right)\right]^{\frac{s}{\delta-\beta}-1}\frac{u^{N_0}}{(\delta-\beta u)^2}\,du. \label{eq:lapl11}
\end{equation}
In terms of the new variable $w(u):=\alpha\left(\frac{1-u}{\delta-\beta u}\right)$ with $\alpha:=\frac{\delta-\beta z}{1-z}$,
the integral above can be written as
\begin{equation}
G(z,s)=\frac{1}{\delta-\beta}\int_0^1 w^{\frac{s}{\delta-\beta}-1}\left(\frac{\alpha-w\delta}{\alpha-w\beta}\right)^{N_0}dw.
\end{equation}
After a second change of variable, $w(t):=e^{(\beta-\delta)t}$, we finally get
\begin{equation}
G(z,s)=\int_0^{\infty} \left(\frac{\alpha-w(t)\delta}{\alpha-w(t)\beta}\right)^{N_0}e^{-st}dt,
\end{equation}
which allows us to identify the generating function
\begin{equation}\label{eq:genfun1}
g(z,t)=\left(\frac{\alpha-w(t)\delta}{\alpha-w(t)\beta}\right)^{N_0}=\left[\frac{\delta-\beta z-(1-z)\delta e^{(\beta-\delta)t}}{\delta-\beta z-(1-z)\beta e^{(\beta-\delta)t}}\right]^{N_0}.
\end{equation}
Integration of~\eqref{eq:PDE11} for $\beta<\delta$ yields exactly the same expression (see Appendix B).

Now, considering $\beta=\delta$, equation~\eqref{eq:integ11} yields
\begin{equation}
\ln G=\ln C+\frac{s}{\delta}\int dz\,\frac{1}{(1-z)^2},
\end{equation}
i.e.,
\begin{equation}\label{eq:sol11bd}
G(z,s)=Ce^{\frac{s}{\delta}\frac{1}{1-z}},
\end{equation}
where $C$ is again an integration constant. Variation of the constant gives again a first-order ODE
for $C(z)$,
\begin{equation}
\delta(1-z)^2C'(z)e^{\frac{s}{\delta}\frac{1}{1-z}}=-z^{N_0},
\end{equation}
for which we impose $C(1)=0$ to avoid divergences, as above. Therefore, the general solution of equation \eqref{eq:ODE11} is
\begin{equation}\label{soleq:ODE11}
 G(z,s)=\int_z^1 \frac {u^{N_0}}{\delta(1-u)^2} e^{-\frac s{\delta}\left(\frac 1{1-u}-\frac 1{1-z}\right)}du.
\end{equation}

The previous function can be obtained through iterated partial integration and, for $N_0\in\mathbb{Z^+}$, the result belongs to the family of \textit{exponential integrals}, denoted by Ei, which is valid for $\mathfrak{R}(z)>0$ ---as in our case because $0\leq z\leq 1$. Ei functions are not elementary functions, see \cite{abst} for further details. But in fact, we are interested here in the inverse-Laplace transformed function, $g(z,t)$, that becomes an elementary function. So, by means of the change $t(u)=\frac 1{\delta}\big(\frac 1{1-u}-\frac 1{1-z}\big)$, we obtain
\begin{equation}\label{lapsolbeqd}
 G(z,s)=\int_0^\infty \left[\frac {\delta(1-z)t+z}{\delta(1-z)t+1}\right]^{N_0} e^{- st}dt.
\end{equation}
Then
\begin{equation}\label{genebeq}
 g(z,t)=\left[\frac {\delta(1-z)t+z}{\delta(1-z)t+1}\right]^{N_0},
\end{equation}
is the sought solution of~\eqref{eq:PDE11} for $\beta=\delta$, satisfying  the boundary conditions $g(z,0)=z^{N_0}$ and $g(1,t)=1$.

In summary, proposition \ref{prop 11} has been proved. We consider the $(b,d)=(1,1)$ case as completely solved since the probability distribution $P_N(t)$ could eventually be obtained through a series expansion of the generating function. In particular, useful expressions for the mean and the variance of the distribution (or even any moment) can be computed for arbitrary values of $N$ and $t$. In addition, the probability of extinction at time $t$ is given by
\begin{equation}\label{eq:P0}
g(0,t)=P(0,t)=\left[\frac{\delta\left(e^{(\beta-\delta)t}-1\right)}{\beta e^{(\beta-\delta)t}-\delta}\right]^{N_0}
\end{equation}
for $\beta\neq \delta$, and
\begin{equation}\label{eq:P0}
g(0,t)=P(0,t)=\left(\frac{\delta t}{\delta t+1}\right)^{N_0}
\end{equation}
for $\beta=\delta$.

\subsection{Solution via Kovacic's algorithm}
\label{sec:Kov11}

For quadratic death rates we find a second-order PDE for the generating function, see equation~\eqref{eq:PDE12} and Section~\ref{sec:12}. The integrability
of this case can be analyzed using Kovacic's algorithm~\cite{kovacic:1986} since the Laplace transform yields a second-order, linear ODE
whose coefficients are rational functions. As we anticipated in Proposition~\ref{prop 12}, the $(b,d)=(1,2)$ PDE is not integrable.
However, we have just shown that, for the linear-rate case $(b,d)=(1,1)$, the problem is integrable. Kovacic's
algorithm usually restricts the values of the parameters in the differential equation in order to yield integrability. In both cases, the
Laplace transform method introduces a new parameter in the equations ---the parameter $s$ associated to the time dependence.

In this section we apply the algorithm by Kovacic to the $(b,d)=(1,1)$ case in order to gain some insight about integrability
of the PDE via the Laplace transform: obviously,  we have to recover the solution~\eqref{eq:sol11} with no restrictions imposed by the
algorithm on the Laplace transform parameter $s$, in agreement with our definition of integrability.

We can apply Kovacic's algorithm to a second-order, linear ODE whose coefficients are rational functions. In order to apply Kovacic's algorithm to the
inhomogeneous, first-order ODE~\eqref{eq:ODE11}, we transform the equation as follows: first eliminate the first derivative,
\begin{equation}
G'(z,s)=\frac{s}{(1-z)(\delta-\beta z)}G(z,s)-\frac{z^{N_0}}{(1-z)(\delta-\beta z)}, \label{eq:ODE11b}
\end{equation}
and then divide the equation by the term $\frac{z^{N_0}}{(1-z)(\delta-\beta z)}$ to get
\begin{equation}
\frac{(1-z)(\delta-\beta z)}{z^{N_0}}G'(z,s)=\frac{s}{z^{N_0}}G(z,s)-1.
\end{equation}
Differentiating both sides of the equation above yields a second-order, linear, homogeneous equation whose coefficients are rational
functions of $z$:
\begin{equation}\label{eq:ODE11K}
G''(z,s)-\frac{(N_0-2)\beta z^2+[s-(N_0-1)(\delta+\beta)]z+\delta N_0}{z(1-z)(\delta-\beta z)}G'(z,s)+\frac{s N_0}{z(1-z)(\delta-\beta z)}G(z,s)=0.
\end{equation}

Now it is convenient to clarify the relation between the solutions of the linear equation \eqref{eq:ODE11b} and of the second order \eqref{eq:ODE11K}, that we write as a lemma for future reference.
\begin{lemma}\label{lema11}
Consider a first order linear ODE,
\begin{equation}
G'=fG+h, \label{eq:lorder1}
\end{equation}
with general solution
\begin{equation}
G_1=C_1e^{\int f dz}+e^{\int f dz}\int e^{-\int f}h dz. \label{eq:sollorder1}
\end{equation}
Then the general solution of the associated second order, linear ODE obtained by derivation over equation~\eqref{eq:lorder1} divided by $h$,
\begin{equation}
G''-\left(f+\frac{h'}h\right)G'+\left(f\frac{h'}h-f'\right)G=0, \label{eq:lorder2}
\end{equation}
is given by
\begin{equation}
G_2=C_1e^{\int f dz}+C_2e^{\int f dz}\int e^{-\int f}h dz, \label{eq:sollorder2}
\end{equation}
\end{lemma}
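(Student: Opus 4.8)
The plan is to verify Lemma~\ref{lema11} by a direct computation that reverses the construction: start from the second order equation \eqref{eq:lorder2}, exhibit the two stated solutions, and then show they span the solution space. First I would record the general solution \eqref{eq:sollorder1} of the first order equation \eqref{eq:lorder1}; this is just the standard variation-of-constants formula, and no hypothesis beyond $f,h\in K$ (with $h\neq 0$, so that division by $h$ makes sense) is needed. Writing $\mu:=e^{\int f\,dz}$, so that $\mu'=f\mu$, the formula reads $G_1=C_1\mu+\mu\int \mu^{-1}h\,dz$.

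Next I would substitute the two candidate functions $y_1:=\mu$ and $y_2:=\mu\int\mu^{-1}h\,dz$ into \eqref{eq:lorder2} and check that each is annihilated. For $y_1=\mu$: $y_1'=f\mu$ and $y_1''=(f'+f^2)\mu$, so plugging in gives $\mu\bigl[(f'+f^2)-(f+h'/h)f+(fh'/h-f')\bigr]=\mu\bigl[f'+f^2-f^2-fh'/h+fh'/h-f'\bigr]=0$. For $y_2$: note $y_2'=f y_2+h$ (that is precisely \eqref{eq:lorder1} with $C_1=0$), hence $y_2''=f'y_2+f y_2'+h'=f'y_2+f(fy_2+h)+h'$; substituting into the left side of \eqref{eq:lorder2} and collecting the coefficient of $y_2$ gives $f'+f^2-(f+h'/h)f+(fh'/h-f')=0$ as before, while the remaining terms are $f h+h'-(f+h'/h)h = fh+h'-fh-h'=0$. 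So both $y_1$ and $y_2$ solve \eqref{eq:lorder2}.

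It then remains to argue that $\{y_1,y_2\}$ is a fundamental system, i.e. linearly independent over $\mathcal{C}$: their Wronskian is $W=y_1 y_2'-y_1'y_2=\mu(fy_2+h)-f\mu\,y_2=\mu h\neq 0$, since $\mu$ is (by definition, as an exponential) nonzero and $h\neq 0$. Hence every solution of the second order equation is a $\mathcal{C}$-linear combination $C_1 y_1+C_2 y_2=C_1 e^{\int f\,dz}+C_2 e^{\int f\,dz}\int e^{-\int f}h\,dz$, which is exactly \eqref{eq:sollorder2}.

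The computation is entirely routine; there is no real obstacle, only bookkeeping. The one point to state carefully is the standing assumption $h\neq 0$, which is what makes the passage ``derive \eqref{eq:lorder1} and divide by $h$'' legitimate and what guarantees $W\neq 0$; I would flag this explicitly at the start of the proof. A secondary subtlety worth a sentence is the relation between the two solution formulas \eqref{eq:sollorder1} and \eqref{eq:sollorder2}: the first order equation \eqref{eq:lorder1} forces the coefficient of $\mu\int\mu^{-1}h$ to equal $1$ (the inhomogeneous term pins it down), whereas its second order consequence \eqref{eq:lorder2} is homogeneous and so admits an arbitrary constant $C_2$ there — this is precisely why passing from first to second order enlarges the solution space, and it is the mechanism we will exploit when applying Kovacic's algorithm to \eqref{eq:ODE11K}.
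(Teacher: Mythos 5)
Your proof is correct, but it takes a genuinely different route from the paper's. You verify by direct substitution that $y_1=e^{\int f\,dz}$ and $y_2=e^{\int f\,dz}\int e^{-\int f}h\,dz$ each satisfy \eqref{eq:lorder2}, and then compute the Wronskian $W=\mu h\neq 0$ to conclude that they form a fundamental system; all the algebra checks out. The paper instead runs the construction backwards in one line: since \eqref{eq:lorder2} is by construction the derivative of $(G'-fG)/h$, the quantity $(G'-fG)/h=:C_2$ is a first integral of the second-order equation, so \eqref{eq:lorder2} is equivalent to the one-parameter family of first-order equations $G'=fG+C_2h$, and solving that family by variation of constants yields \eqref{eq:sollorder2} directly. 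The paper's argument is shorter and conceptually transparent --- it explains \emph{why} the inhomogeneous particular solution acquires an arbitrary coefficient $C_2$ upon passing to the homogeneous second-order equation, which is exactly the remark you make at the end --- but it leaves the linear independence of the two solutions implicit. Your version costs more bookkeeping but makes that independence explicit through the Wronskian, and you are right to flag the standing hypothesis $h\neq 0$ (needed both to divide by $h$ and to get $W\neq 0$), which the paper uses silently. Either proof is acceptable; the two are complementary rather than in conflict.
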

\begin{proof}
A first integral of equation \eqref{eq:lorder2} is given by the linear first order equation 
\begin{equation}
\frac{G'-f G}{h}=:C_2\Leftrightarrow  G'=fG+C_2h, \label{eq:lorder1g}
\end{equation}
which coincides with \eqref{eq:lorder1} for $C_2=1$. Then solving equation \eqref{eq:lorder1g}, we obtain \eqref{eq:sollorder2}.
 \end{proof}
In other words, a fundamental system of solutions of \eqref{eq:lorder2} is given by a non-trivial solution of the homogeneous part of \eqref{eq:lorder1} and by any of the  particular solutions of \eqref{eq:lorder1} (like the one obtained by variation of constants). In particular, \eqref{eq:lorder2} has always a solution  given by the exponential of an integral: $e^{\int f dz}$.
\medskip

Now we focus on the solutions of the second-order ODE~\eqref{eq:ODE11K} yielded by Kovacic's algorithm. For that purpose we normalize~\eqref{eq:ODE11K} to write it in the form $H''-r(z,s)H=0$
for a new function $H(z,s)$. If we define
\begin{equation}\label{eq:ab}
\begin{aligned}
&a(z,s):=-\frac{(N_0-2)\beta z^2+[s-(N_0-1)(\delta+\beta)]z+\delta N_0}{z(1-z)(\delta-\beta z)},\\
&b(z,s):=\frac{s N_0}{z(1-z)(\delta-\beta z)},
\end{aligned}
\end{equation}
then the invariant normal form of~\eqref{eq:ODE11K} is obtained using equation~\eqref{redsec}:
\begin{equation}\label{eq:ODE11K2}
H''(z,s)-\left(\frac{1}{2}a'(z,s)+\frac{1}{4}a^2(z,s)-b(z,s)\right)H(z,s)=0,
\end{equation}
where $G(z,s)=H(z,s)\psi(z,s)$ and $\psi(z,s)$ satisfies the first-order ODE
\begin{equation}\label{eq:psieq}
2\psi'(z,s)+a(z,s)\psi(z,s)=0.
\end{equation}
Note also that, for $\beta\neq\delta$,
\begin{equation}\label{eq:a}
a(z,s)=-\frac{N_0}{z}-\left(1+\frac{s}{\delta-\beta}\right)\frac{1}{1-z}-\left(1-\frac{s}{\delta-\beta}\right)\frac{\beta}{\delta-\beta z},
\end{equation} 
while for $\beta=\delta$,
 \begin{equation}\label{eq:abd}
 a(z,s)=-\frac{N_0}{z}-\frac{2}{1-z}+\frac{s}{\delta}\frac{1}{(1-z)^2}.
\end{equation}
Integration of~\eqref{eq:psieq} yields
\begin{equation}\label{eq:psi}
\psi(z,s)=z^{N_0/2}(1-z)^{-\frac{1}{2}\left(1+\frac{s}{\delta-\beta}\right)}(\delta-\beta z)^{-\frac{1}{2}\left(1-\frac{s}{\delta-\beta}\right)}, \,\, \beta\ne\delta,
\end{equation}
and \begin{equation}\label{eq:psibd}
\psi(z,s)=z^{N_0/2}(1-z)^{-1}e^{-\frac{s}{2\delta}\frac{1}{(1-z)}}, \,\,\beta=\delta.
\end{equation}
Now we apply Kovacic's algorithm to~\eqref{eq:ODE11K} to check the integrability of  this equation. Together
with~\eqref{eq:psi}, we will construct solutions for the Laplace transform of the generating function as $G(z,s)=H(z,s)\psi(z,s)$. We recall that, by Lemma \ref{lema11}, the equation~\eqref{eq:ODE11K} has always a solution given by the exponential of an integral of a rational function,
$$G=e^{\int f dz}.$$ 
Then equation~\eqref{eq:ODE11K} has a solution given by the exponential of an integral in $K$,
$$H=G\psi^{-1}=e^{\int \left(f+\frac a2\right)dz}.$$
This implies that~\emph{case 1 of Kovacic's algorithm always holds for equation~\eqref{eq:ODE11K2}.}

The computations that lead to the closed-form solution of equation~\eqref{eq:ODE11K2} go as follows (see Appendix A for details on how the algorithm proceeds in a general setup). As can be easily checked, the rational function
\begin{equation}\label{eq:r11}
r(z,s)=\frac{1}{2}a'(z,s)+\frac{1}{4}a^2(z,s)-b(z,s)
\end{equation}
has three finite singularities at $z=0$, $z=1$ and $z=\delta/\beta$ if $\beta\ne\delta$. The algorithm is based on the
orders of the poles of $r(z,s)$ in the complex plane, considering the singularity $z=\infty$ as well. Let $\Gamma'$ be the set of
finite poles of $r(z,s)$ in the complex plane. Let $\Gamma=\Gamma'\cup\{\infty\}$. The method is based on the Laurent series
expansions of $r(z,s)$ about the singularities in $\Gamma$ (see Appendix A). Let $\circ(c)$ denote the order of the pole $c$ in the Laurent
series expansion. In this example, the following series expansions hold:
\begin{itemize}
\item[(i)] $r(z,s)=\frac{N_0(N_0+2)}{4z^2}+\dots$ about $z=0$.
\item[(ii)] $r(z,s)=\frac{1}{4}\left(-1+\frac{s^2}{(\delta-\beta)^2}\right)\frac{1}{(z-1)^2}+\dots$ about $z=1$.
\item[(iii)] $r(z,s)=\frac{1}{4}\left(-1+\frac{s^2}{(\delta-\beta)^2}\right)\frac{1}{(z-\delta/\beta)^2}+\dots$ about $z=\frac{\delta}{\beta}$.
\item[(iv)] $r(z,s)=\frac{N_0(N_0-2)}{4z^2}+\dots$ about $z=\infty$.
\end{itemize}
We study the existence of case 1 solutions in Kovacic's algorithm: all the poles have order 2, hence $\circ(c)=2$ for all $c\in\Gamma$. Therefore we write the 
Laurent series expansion of $\sqrt{r}$ about $c$, $[\sqrt{r}]_c$, as $[\sqrt{r}]_c=0$ for all $c\in\Gamma$ (see details about the general notation used in case 1 in 
Appendix A). Then we compute
\begin{equation}
\alpha_c^{\pm}=\frac{1}{2}\pm\frac{1}{2}\sqrt{1+4b},
\end{equation}
where $b$ is the residue of $r$ at the singularity $c$ (Appendix A). For $z=0$ we obtain $\alpha_0^{+}=1+\frac{N_0}{2}$ and
$\alpha_0^{-}=-\frac{N_0}{2}$. For $z=1$, we get $\alpha_1^{\pm}=\frac{1}{2}\left(1\pm \frac{s}{\delta-\beta}\right)$.
For $z=\delta/\beta$ we obtain $\alpha_{\delta/\beta}^{\pm}=\alpha_1^{\pm}$ because the residues associated to both
singularities coincide. Finally, for $z=\infty$ we obtain $\alpha_{\infty}^{+}=\frac{N_0}{2}$ and $\alpha_0^{-}=1-\frac{N_0}{2}$.

We now form the $2^4$ possible permutations of signs for the four singularities and compute the quantity 
$m=\alpha_{\infty}^{\varepsilon(\infty)}-\sum_{c\in\Gamma'}\alpha_c^{\varepsilon(c)}$. Let $\hat{s}:=\frac{s}{\delta-\beta}$. Then
\begin{table*}[h!]
\begin{center}
\begin{tabular}{ccccc}
\hline
$\varepsilon(\infty)$ & $\varepsilon(\delta/\beta)$ & $\varepsilon(1)$ & $\varepsilon(0)$ & $m=\alpha_{\infty}^{\varepsilon(\infty)}-\sum_{c\in\Gamma'}\alpha_c^{\varepsilon(c)}$\\
\hline
$+$ & $\pm$ & $\pm$ & $+$ & $-2\mp\hat{s}$\\
$+$ & $\pm$ & $\pm$ & $-$ & $N_0-1\mp\hat{s}$\\
$-$ & $\pm$ & $\pm$ & $+$ & $-N_0-1\mp\hat{s}$\\
$-$ & $\pm$ & $\pm$ & $-$ & $\mp\hat{s}$\\
$+$ & $\pm$ & $\mp$ & $+$ & $-2$\\
$+$ & $\pm$ & $\mp$ & $-$ & $N_0-1$\\
$-$ & $\pm$ & $\mp$ & $+$ & $-N_0-1$\\
$-$ & $\pm$ & $\mp$ & $-$ & $0$\\
\hline
\end{tabular}
\end{center}
\end{table*}

According to the algorithm, we have to consider only those permutations that yield a non-negative integer $m$ (Appendix A). This discards, for example, the cases
$(\varepsilon(\infty),\varepsilon(\delta/\beta),\varepsilon(1),\varepsilon(0))=(+,\pm,\mp,+)$ and $(-,\pm,\mp,+)$ ---recall that $N_0$ is a non-negative integer number. 
It is possible to find integrability
for the $(+,\pm,\mp,-)$ and $(-,\pm,\mp,-)$ cases. The remaining $8$ cases depend explicitly on $s$, hence
imposing that $m$ is a non-negative integer would restrict the possible values of $s$ yielding closed-form solutions. In principle
we are interested in finding a solution valid for \emph{any} value of $s$, so we do not enter in the discussion of these cases in
the main text. We leave the analysis of some of them for Appendix C, including computations for the sign combinations $(+,\pm,\mp,-)$.

Here we will discuss only one of the potential $4$ cases that can yield a solution: consider the permutation $(-,+,-,-)$, which corresponds to $m=0$. Then the
algorithm proceeds by considering the rational function
\begin{equation}
\omega(z,s)=\sum_{c\in\Gamma'}\left(\varepsilon(c)[\sqrt{r}]_c+\frac{\alpha_c^{\varepsilon(c)}}{z-c}\right)+s(\infty)[\sqrt{z}]_{\infty}.
\end{equation}
In our example this function reduces to
\begin{equation}
\omega(z,s)=\frac{\alpha_0^-}{z}+\frac{\alpha_1^-}{z-1}+\frac{\alpha_{\delta/\beta}^+}{z-\delta/\beta}.
\end{equation}
Given that $\alpha_0^-=-\frac{N_0}{2}$, $\alpha_1^-=\frac{1}{2}\left(1-\frac{s}{\delta-\beta}\right)$ and
$\alpha_{\delta/\beta}^+=\frac{1}{2}\left(1+\frac{s}{\delta-\beta}\right)$ we get, according to equation~\eqref{eq:a},
\begin{equation}\label{eq:omega}
\omega(z,s)=-\frac{N_0}{2z}-\frac{1}{2}\left(1-\frac{s}{\delta-\beta}\right)\frac{1}{1-z}-
\frac{1}{2}\left(1+\frac{s}{\delta-\beta}\right)\frac{\beta}{\delta-\beta z}.
\end{equation}
The algorithm now searches for a monic polynomial $P_m(z)$ of degree $m$ that satisfies the differential equation
\begin{equation}\label{eq:Pm}
P_m''+2\omega P_m'+(\omega'+\omega^2-r)P_m=0.
\end{equation}
If such polynomial exists, then a solution of the form $P_me^{\int\omega}$ exists. In our case
$m=0$ and, as it can be easily checked using~\eqref{eq:r11}, the function $\omega(z,s)$ defined in equation~\eqref{eq:omega}
satisfies identically the condition $\omega'+\omega^2-r=0$. Therefore equation~\eqref{eq:Pm} is satisfied by the constant
monic polynomial $P_0=1$ and we find the following closed-form solution for~\eqref{eq:ODE11K2}:
\begin{equation}
H(z,s)=\exp\left\{\int^z \omega(u,s)\,du\right\}=
z^{-N_0/2}(1-z)^{\frac{1}{2}\left(1-\frac{s}{\delta-\beta}\right)}(\delta-\beta z)^{\frac{1}{2}\left(1+\frac{s}{\delta-\beta}\right)}.
\end{equation}
Therefore, using~\eqref{eq:psi}, we get
\begin{equation}\label{eq:G1}
G(z,s)=H(z,s)\psi(z,s)=\left(\frac{\delta-\beta z}{1-z}\right)^{\frac{s}{\delta-\beta}}
\end{equation}
and we recover the solution~\eqref{eq:sol11} obtained from the first-order homogeneous ODE. Note that 
$G(z,s)=\left(\frac{\delta-\beta z}{1-z}\right)^{\frac{s}{\delta-\beta}}=e^{\int f(z,s)dz}$, with $f(z,s)$ given by~\eqref{eq:fh}.
But this solution is not the Laplace transform $G(z,s)$ ---equation~\eqref{eq:lapl11}--- of the generating function $g(z,t)$ 
we are looking for. However, we can use~\eqref{eq:G1} and Lemma~\ref{lema11} with $f$ and $h$ given by~\eqref{eq:fh} 
to construct the relevant solution as
\begin{equation}
G(z,s)=e^{\int f dz}\int e^{-\int f}h dz=-\left(\frac{\delta-\beta z}{1-z}\right)^{\frac{s}{\delta-\beta}}\int^z
\left(\frac{1-u}{\delta-\beta u}\right)^{\frac{s}{\delta-\beta}}\frac{u^{N_0}}{(1-u)(\delta-\beta u)}\,du,
\end{equation}
which is the exact same solution obtained in~\eqref{eq:lapl11}. Alternatively, it would be also possible to obtain this solution by applying to 
equation~\eqref{eq:ODE11K} the D'Alambert order reduction of a linear equation when a particular solution is known ---we, however, skip the details here.

In a similar way we apply Kovacic's algorithm for $\beta=\delta$. Now, equation \eqref{eq:r11} becomes
$$r(z,s)=\frac{N_0(N_0+2)}{4z^2}+\frac{N_0(2\delta+s)}{2\delta(1-z)}+\frac{N_0s}{2\delta(1-z)^2}+\frac{N_0(2\delta+s)}{2\delta z}+\frac{s^2}{4\delta^2(1-z)^4}.$$
Applying the case 1 of Kovacic's algorithm we obtain that the solution of $H''=r(z,s)H$ is
$$H(z,s)=z^{-\frac{N_0}{2}}(1-z)e^{-\frac{s}{2\delta(1-z)}}.$$
Now, using equation \eqref{eq:psibd}, we conclude that
$$G(z,s)=H(z,s)\psi(z,s)=e^{\frac{s}{\delta(1-z)}},$$
as in~\eqref{eq:sol11bd}. We can recover the sought Laplace transform~\eqref{soleq:ODE11} using Lemma~\ref{lema11} as presented above.

An important insight that we infer thanks to the analysis of the first-order equation via Kovacic's algorithm is the
following conjecture: if we were to obtain integrability of the corresponding PDE  via a Laplace transform, we conjecture that a
necessary condition to obtain solutions of the form of Kovacic's first case is that the combination
\begin{equation}
m=\alpha_{\infty}^{\varepsilon(\infty)}-\sum_{c\in\Gamma'}\alpha_c^{\varepsilon(c)}
\end{equation}
remains \emph{independent} of $s$, as our definition of integrability for equation~\eqref{eq:PDE gen} requires integrability of the linear ODE~\eqref{eq:ODE gen} for any value of the parameter $s$.

\noindent{\bf Remark}: we observe that for $\beta\neq \delta$ equation~\eqref{eq:ODE11K} has 4 singular regular points at $z=0$, $z=1$, $z=\beta/\delta$ and $z=\infty$.
Therefore, it corresponds exactly to the general Heun's differential equation in the independent variable $z$ with parameters $\delta/\beta$, $sN_0/\beta$, $0$, $1-N_0$, $-N_0$, and  $(\beta-\delta+s)/(\beta-\delta)$. On the other hand, when $\beta=\delta$, we can observe that this equation has two regular singularities at $z=0$ and $z=\infty$, while it has one irregular singularity at $z=1$. We conclude that, with the changes of variables $G\mapsto Gz^{-N_0-1}/(1-z)^2$ and $z\mapsto (z-1)/z$, the equation corresponds to the confluent Heun's differential equation with parameters $s/\delta$, $N_0-1$, $N_0+1$, $0$, $(N_0^2\delta-sN_0+\delta)/(2\delta)$. Moreover, we observe that in the non-homogeneous first order linear differential equation the points $z=\infty$ and $z=0$ are ordinary points, but with the procedure to transform it into an homogeneous second order linear differential equation the points $z=\infty$ and $z=0$ are regular singular points. The type of singularity of $z=1$ and $z=\beta/\delta$ is preserved under such procedure for the cases $\beta=\delta$ and $\beta\neq \delta$, though. For further details about Heun's differential equations, we refer the reader to reference~\cite{ronveaux:1995}. We remark that a complete characterization of the integrability of Heun's equations is today an open problem. Here it was possible to solve the integrability problem because the equations correspond to very special subfamilies of Heun's general families.

\section{Mixed rates: $(b,d)=(1,2)$}
\label{sec:12}

As we have shown in the previous section, Kovacic's algorithm turns out to be a powerful tool to analyze the integrability of PDE associated to
birth-death processes via a Laplace transform. In biological terms, a relevant case arises when mortality processes involve pairs of individuals, i.e.,
when the death rate is a quadratic function of the number of individuals. In this section we apply the same technology to analyze the integrability
of the PDE associated to this situation.

As in the case of linear birth and death rates, we start by finding the PDE satisfied by the generating function when the
birth rate is linear, $B_N=\beta N$, and the mortality rate is a quadratic function of $N$, $D_N=\delta N^2$. The
generating function satisfies
\begin{equation}\label{eq:BD2}
\begin{aligned}
\frac{\partial g(z,t)}{\partial t}&=
\sum_{N=0}^{\infty}\left\{\beta(N-1)P_{N-1}(t)z^N+\delta(N+1)^2P_{N+1}(t)z^N-(\beta N+\delta N^2)P_N(t)z^N\right\}\\
&=\beta\sum_{N=1}^{\infty}\left[(N-1)P_{N-1}(t)z^N-NP_N(t)z^N\right]+\delta\sum_{N=0}^{\infty}\left[(N+1)^2P_{N+1}(t)z^N-N^2P_N(t)z^N\right],
\end{aligned}
\end{equation}
where we have used that $P_N(t):=0$ for $N<0$. The following identities hold:
\begin{itemize}
\item[(i)] $\frac{\partial g}{\partial z}+z\frac{\partial^2 g}{\partial z^2}=\sum_{N=1}^{\infty} N^2P_N(t)z^{N-1}=\sum_{N=0}^{\infty} (N+1)^2P_{N+1}(t)z^{N}$,
\item[(ii)] $z\frac{\partial g}{\partial z}=\sum_{N=1}^{\infty} NP_N(t)z^N$,
\item[(iii)] $z^2\frac{\partial g}{\partial z}=\sum_{N=1}^{\infty} (N-1)P_{N-1}(t)z^N$.
\end{itemize}
Therefore, we can express
\begin{equation}
\sum_{N=0}^{\infty}\left[(N+1)^2P_{N+1}(t)z^N-N^2P_N(t)z^N\right]=(1-z)\left(\frac{\partial g(z,t)}{\partial z}+z\frac{\partial^2 g(z,t)}{\partial z^2}\right)
\end{equation}
and
\begin{equation}
\sum_{N=1}^{\infty}\left[(N-1)P_{N-1}(t)z^N-NP_N(t)z^N\right]=-z(1-z)\frac{\partial g(z,t)}{\partial z}.
\end{equation}
As a consequence, we obtain a second-order PDE to be satisfied by the generating function, see equation~\eqref{eq:PDE12}.
Similarly, we impose here the initial condition $g(z,0)=z^{N_0}$ and the normalization condition $g(1,t)=1$. In order to find solutions of equation~\eqref{eq:PDE12}, we follow the same procedure as for the $(1,1)$ case: we introduce the Laplace transform $G(z,s)$ of the generating function and try to solve the parametric ODE satisfied by $G(z,s)$ for arbitrary values of $s$. In terms of $G(z,s)$, the ODE reads
\begin{equation}\label{eq:ODE12}
(1-z)\left[\delta z G''(z,s)+(\delta -\beta z)G'(z,s)\right]-sG(z,s)=-z^{N_0},
\end{equation}
where, again, primes denote derivatives with respect to $z$. Here we denote
\begin{equation}\label{eq:ab12}
\begin{aligned}
&a(z):=\frac{\delta-\beta z}{\delta z},\\
&b(z,s):=-\frac{s}{\delta z(1-z)},
\end{aligned}
\end{equation}
hence~\eqref{eq:ODE12} can be expressed as
\begin{equation}\label{eq:ODE12b}
G''(z,s)+a(z)G'(z,s)+b(z,s)G(z,s)=-\frac{z^{N_0}}{\delta z(1-z)}.
\end{equation}
In order to find the invariant normal form of~\eqref{eq:ODE12b}, we write $G(z,s)=H(z,s)\psi(z)$ and impose that $\psi(z)$ satisfies the first-oder ODE
\begin{equation}
2\psi'(z)+a(z)\psi(z)=0.
\end{equation}
Note that, in this case, $\psi$ is independent of $s$. Integration yields $\psi(z)=z^{-1/2}e^{\beta z/2\delta}$. Hence~\eqref{eq:ODE12b} reduces to
the following second-order, non-homogeneous ODE for function $H(z,s)$:
\begin{equation}
H''(z,s)\psi(z)-\left(\frac{1}{2}a'(z,s)+\frac{1}{4}a^2(z,s)-b(z,s)\right)H(z,s)\psi(z)=-\frac{z^{N_0}}{\delta z(1-z)}.
\end{equation}
Equivalently, $H(z,s)$ satisfies
\begin{equation}\label{eq:ODEH}
H''(z,s)-\left[\left(-\frac{1}{2z}+\frac{\beta}{2\delta}\right)^2-\frac{1}{2z^2}+\frac{s}{\delta z(1-z)}\right]H(z,s)=-\frac{z^{N_0-\frac{1}{2}}}{\delta(1-z)}e^{-\beta z/2\delta}.
\end{equation}
This is the second-order normal invariant form of the original ODE. We want to see whether we can find closed-form solutions for
 this ODE for \emph{any} value of the parameter $s$.

\noindent{\bf Remark}: this equation has 3 singular  points (as shown below) at $z=0$, $z=1$ (regular ones), and $z=\infty$ (irregular).
Therefore, it belongs to the family of Heun's confluent equations~\cite{ronveaux:1995}. Thus, the $(b,d)=(1,2)$ case also belongs to Heun's families.

\subsection{Solution via Kovacic's algorithm}

In line with our definition of integrability (definition~\ref{def:integ}), we look for closed-form solutions of the homogeneous part of 
equation~\eqref{eq:ODEH}. For that purpose we define
\begin{equation}\label{eq:r}
r(z,s)=\left(-\frac{1}{2z}+\frac{\beta}{2\delta}\right)^2-\frac{1}{2z^2}+\frac{s}{\delta z(1-z)}=\frac{\beta^2z^3 - \beta (\beta + 2 \delta)z^2 - \delta (4 s - 2 \beta + \delta)z + \delta^2 }{4\delta^2 z^2(z-1)}
\end{equation}
and apply Kovacic's algorithm to search for closed-form solutions. In our case, $\Gamma'=\{0,1\}$ with orders $\circ(0)=2$ and $\circ(1)=1$. The following series expansions for $r(z,s)$ about the three elements in $\Gamma$ hold:
\begin{itemize}
\item[(i)] $r(z,s)=-\frac{1}{4z^2}+\dots$ about $z=0$.
\item[(ii)] $r(z,s)=-\frac{s}{\delta(z-1)}+\dots$ about $z=1$.
\item[(iii)] $r(z,s)=\frac{\beta^2}{4\delta^2}-\frac{\beta}{2\delta z}+\dots$ about $z=\infty$.
\end{itemize}
We observe in equation~\eqref{eq:r} that the order of $r$ at $\infty$ is $\circ(\infty)=0$. We analyze the different cases in the algorithm by Kovacic 
(see further details in Appendix A):
\begin{description}
\item[Case 1] 
Since $\circ(0)=2$, we set $[\sqrt{r}]_0=0$ and obtain $\alpha_0^{\pm}=\frac{1}{2}\pm\frac{1}{2}\sqrt{1+4b}=\frac{1}{2}$ because the residue at $z=0$ is $b=-\frac{1}{4}$.

For $z=1$, since $\circ(1)=1$, we set $[\sqrt{r}]_1=0$ and $\alpha_1^{\pm}=1$.

For $z=\infty$, since $\circ(\infty)=0=-2\nu$ and $r(z,s)=q^2+b/z+\dots$, with $q=\frac{\beta}{2\delta}$ and $b=-\frac{\beta}{2\delta}$, we set $[\sqrt{r}]_{\infty}=q=\frac{\beta}{2\delta}$ and $\alpha_{\infty}^{\pm}=\left(\pm\frac{b}{q}-\nu\right)/2=\mp\frac{1}{2}$.

We now consider all the possible combinations of signs for the three points:

\begin{table*}[h]
\begin{center}
\begin{tabular}{cccc}
\hline
$\varepsilon(\infty)$ & $\varepsilon(0)$ & $\varepsilon(1)$ & $m=\alpha_{\infty}^{\varepsilon(\infty)}-\alpha_0^{\varepsilon(0)}-\alpha_1^{\varepsilon(1)}$\\
\hline
$+$ & $\pm$ & $\pm$ & $-\frac{1}{2}-\frac{1}{2}-1=-2$\\
$-$ & $\pm$ & $\pm$ & $\frac{1}{2}-\frac{1}{2}-1=-1$\\
$+$ & $\mp$ & $\mp$ & $-\frac{1}{2}-\frac{1}{2}-1=-2$\\
$-$ & $\mp$ & $\mp$ & $\frac{1}{2}-\frac{1}{2}-1=-1$\\
\hline
\end{tabular}
\end{center}
\end{table*}

Since all the values of $m$ are negative integers, Kovacic's algorithm does not find solutions of the form $P_m e^{\int \omega}$ with 
$P_m$ a polynomial.

\item[Case 2] Given the orders of the singularities of $r(z,s)$, we define the following subsets of $\mathbb{Z}$ (see a full description of how the algorithm 
proceeds in this case in Appendix A):

For $z=0$, since $\circ(0)=2$ and the residue at $z=0$ is $b=-\frac{1}{4}$, we have $E_0:=\{2+k\sqrt{1+4b}, k=0,\pm 2\}=\{2\}$.

For $z=1$, since $\circ(1)=1$, we define $E_1:=\{4\}$.

For $z=\infty$, since $\circ(\infty)=0<2$, then $E_{\infty}:=\{0\}$.

We now find the positive combinations of the sum $m=\frac{1}{2}\left(e_{\infty}-\sum_{c\in\Gamma'}e_c\right)$ for $e_p\in E_p$, $p\in\Gamma$. The only combination is $(e_0,e_1,e_{\infty})=(2,4,0)$, hence $m=\frac{1}{2}(0-2-4)<0$. Therefore the set of positive $m$ is empty and there are no solutions in this case.

\item[Case 3] A necessary condition for this case to work is that $\circ(\infty)\ge 2$, see~\cite{kovacic:1986}. There are no solutions of this type since 
$\circ(\infty)=0$.

\end{description}

Therefore, we conclude that  equation~\eqref{eq:ODEH} is non-integrable for any value of $s$. Hence,  the homogeneous part of  equation~\eqref{eq:ODE12} is also  non-integrable and, as a consequence, the PDE~\eqref{eq:PDE12} becomes non-integrable as well. This proves proposition~\ref{prop 12}.

\appendix

\section*{Appendix A}
\renewcommand\theequation{B.\arabic{equation}}

This Appendix describes Kovacic's algorithm in detail. In our presentation here, we follow the original
version given by Kovacic in reference \cite{kovacic:1986} with an adapted
version presented in \cite{acbl,amw}.

Each case in Kovacic's algorithm is related with each one of the
algebraic subgroups of ${\rm SL}(2,\mathbb{C})$ and the associated
Riccatti equation
$$v'=r-v^{2}=\left( \sqrt{r}-v\right)
\left(  \sqrt{r}+v\right),\quad v={\zeta'\over \zeta}.$$

According to Theorem \ref{subgroups}, there are four cases in
Kovacic's algorithm. Only for cases 1, 2 and 3 we can solve the
differential equation, but for the case 4 the differential
equation is not integrable. It is possible that Kovacic's
algorithm can provide us only one solution ($\zeta_1$), so that we
can obtain the second solution ($\zeta_2$) through
\begin{equation}\label{second}
\zeta_2=\zeta_1\int\frac{dx}{\zeta_1^2}.
\end{equation}

\noindent{\bf\large Notations.} For the differential equation given by
$$\partial_x^2\zeta=r\zeta,\qquad r={s\over t},\quad s,t\in \mathbb{C}[x],$$
we use the following notations.
\begin{enumerate}
\item Denote by $\Gamma'$ be
the
set of (finite) poles of $r$, $\Gamma^{\prime}=\left\{  c\in\mathbb{C}%
:t(c)=0\right\}$.

\item Denote by
$\Gamma=\Gamma^{\prime}\cup\{\infty\}$.
\item By the order of $r$ at
$c\in \Gamma'$, $\circ(r_c)$, we mean the multiplicity of $c$ as a
pole of $r$.

\item By the order of $r$ at $\infty$, $\circ\left(
r_{\infty}\right) ,$ we mean the order of $\infty$ as a zero of
$r$. That is $\circ\left( r_{\infty }\right)
=\mathrm{deg}(t)-\mathrm{deg}(s)$.

\end{enumerate}
\textbf{The four cases}\\

\noindent{\bf\large Case 1.} In this case $\left[ \sqrt{r}\right] _{c}$ and
$\left[ \sqrt{r}\right] _{\infty}$ stand for the Laurent series of
$\sqrt{r}$ at $c$ and the Laurent series of $\sqrt{r}$ at $\infty$
respectively. Furthermore, we define $\varepsilon(p)$ as follows:
if $p\in\Gamma,$ then $\varepsilon\left( p\right) \in\{+,-\}.$
Finally, the complex numbers $\alpha_{c}^{+},\alpha_{c}^{-},\alpha_{\infty}%
^{+},\alpha_{\infty}^{-}$ will be defined in the first step. If
the differential equation has no poles it only can fall in this
case.
\medskip

{\bf Step 1.} For each $c \in \Gamma'$ and for $\infty$ consider the
following possibilities:

\begin{description}

\item[$(c_{0})$] If $\circ\left(  r_{c}\right)  =0$, then
$$\left[ \sqrt {r}\right] _{c}=0,\quad\alpha_{c}^{\pm}=0.$$

\item[$(c_{1})$] If $\circ\left(  r_{c}\right)  =1$, then
$$\left[ \sqrt {r}\right] _{c}=0,\quad\alpha_{c}^{\pm}=1.$$

\item[$(c_{2})$] If $\circ\left(  r_{c}\right)  =2,$ and $$r= \cdots
+ b(x-c)^{-2}+\cdots,\quad \textrm{then}$$
$$\left[ \sqrt {r}\right]_{c}=0,\quad \alpha_{c}^{\pm}=\frac{1\pm\sqrt{1+4b}}{2}.$$

\item[$(c_{3})$] If $\circ\left(  r_{c}\right)  =2v\geq4$, and $$r=
(a\left( x-c\right)  ^{-v}+...+d\left( x-c\right)
^{-2})^{2}+b(x-c)^{-(v+1)}+\cdots,\quad \textrm{then}$$ $$\left[
\sqrt {r}\right] _{c}=a\left( x-c\right) ^{-v}+...+d\left(
x-c\right) ^{-2},\quad\alpha_{c}^{\pm}=\frac{1}{2}\left(
\pm\frac{b}{a}+v\right).$$

\item[$(\infty_{1})$] If $\circ\left(  r_{\infty}\right)  >2$, then
$$\left[\sqrt{r}\right]  _{\infty}=0,\quad\alpha_{\infty}^{+}=0,\quad\alpha_{\infty}^{-}=1.$$

\item[$(\infty_{2})$] If $\circ\left(  r_{\infty}\right)  =2,$ and
$r= \cdots + bx^{2}+\cdots$, then $$\left[
\sqrt{r}\right]  _{\infty}=0,\quad\alpha_{\infty}^{\pm}=\frac{1\pm\sqrt{1+4b}%
}{2}.$$

\item[$(\infty_{3})$] If $\circ\left(  r_{\infty}\right) =-2v\leq0$,
and
$$r=\left( ax^{v}+...+d\right)  ^{2}+ bx^{v-1}+\cdots,\quad \textrm{then}$$
$$\left[  \sqrt{r}\right]  _{\infty}=ax^{v}+...+d,\quad
and\quad \alpha_{\infty}^{\pm }=\frac{1}{2}\left(
\pm\frac{b}{a}-v\right).$$
\end{description}

{\bf Step 2.} Find $D\neq\emptyset$ defined by
$$D=\left\{
n\in\mathbb{Z}_{+}:n=\alpha_{\infty}^{\varepsilon
(\infty)}-%
{\displaystyle\sum\limits_{c\in\Gamma^{\prime}}}
\alpha_{c}^{\varepsilon(c)},\forall\left(  \varepsilon\left(
p\right) \right)  _{p\in\Gamma}\right\}  .$$ If $D=\emptyset$,
then we should start with the case 2. Now, if
$\mathrm{Card}(D)>0$, then for each $n\in D$ we search $\omega$
$\in\mathbb{C}(x)$ such that
$$\omega=\varepsilon\left(
\infty\right)  \left[  \sqrt{r}\right]  _{\infty}+%
{\displaystyle\sum\limits_{c\in\Gamma^{\prime}}}
\left(  \varepsilon\left(  c\right)  \left[  \sqrt{r}\right]  _{c}%
+{\alpha_{c}^{\varepsilon(c)}}{(x-c)^{-1}}\right).$$

{\bf Step 3}. For each $n\in D$, search for a monic polynomial
$P_n$ of degree $n$ with
\begin{equation}\label{recu1}
P_n'' + 2\omega P_n' + (\omega' + \omega^2 - r) P_n = 0.
\end{equation}
If success is achieved then $\zeta_1=P_n e^{\int\omega}$ is a
solution of the differential equation.  Otherwise, case 1 cannot hold.
\bigskip

\noindent{\bf\large Case 2.} \medskip

{\bf Step 1.} For each $c\in\Gamma^{\prime}$ and $\infty$ compute
non-empty sets $E_{c}\subset\mathbb{Z}$ and $E_{\infty}\subset\mathbb{Z}$ defined as follows:

\begin{description}
\item[($c_1$)] If $\circ\left(  r_{c}\right)=1$, then $E_{c}=\{4\}$.

\item[($c_2$)] If $\circ\left(  r_{c}\right)  =2,$ and $r= \cdots +
b(x-c)^{-2}+\cdots ,\ $ then $$E_{c}=\left\{
2+k\sqrt{1+4b}:k=0,\pm2\right\}.$$

\item[($c_3$)] If $\circ\left(  r_{c}\right)  =v>2$, then $E_{c}=\{v\}$.

\item[$(\infty_{1})$] If $\circ\left(  r_{\infty}\right)  >2$, then
$E_{\infty }=\{0,2,4\}$.

\item[$(\infty_{2})$] If $\circ\left(  r_{\infty}\right)  =2,$ and
$r= \cdots + bx^{2}+\cdots$, then $$E_{\infty }=\left\{
2+k\sqrt{1+4b}:k=0,\pm2\right\}.$$

\item[$(\infty_{3})$] If $\circ\left(  r_{\infty}\right)  =v<2$,
then $E_{\infty }=\{v\}$.
\end{description}

{\bf Step 2.} Find $D\neq\emptyset$ defined by
$$D=\left\{
n\in\mathbb{Z}_{+}:\quad n=\frac{1}{2}\left(  e_{\infty}-
{\displaystyle\sum\limits_{c\in\Gamma^{\prime}}} e_{c}\right)
,\forall e_{p}\in E_{p},\quad p\in\Gamma\right\}.$$ If
$D=\emptyset,$ then we should start the case 3. Now, if
$\mathrm{Card}(D)>0,$ then for each $n\in D$ we search a rational
function $\theta$ defined by
$$\theta=\frac{1}{2}
{\displaystyle\sum\limits_{c\in\Gamma^{\prime}}}
\frac{e_{c}}{x-c}.$$

{\bf Step 3.} For each $n\in D,$ search for a monic polynomial $P_n$
of degree $n$, such that
\begin{equation}\label{recu2}
P_n'''+3\theta
P_n''+(3\theta'+3\theta
^{2}-4r)P_n'+\left(
\theta''+3\theta\theta'
+\theta^{3}-4r\theta-2r'\right)P_n=0.
\end{equation}
 If $P_n$ does not
exist, then case 2 cannot hold. If such a polynomial is found, set
$\phi = \theta + P_n'/P_n$ and let $\omega$ be a solution
of
$$\omega^2 + \phi \omega + {1\over2}\left(\phi' + \phi^2 -2r\right)=
0.$$

Then $\zeta_1 = e^{\int\omega}$ is a solution of the differential
equation.
\bigskip

\noindent{\bf\large Case 3.} \medskip

{\bf Step 1.} For each $c\in\Gamma^{\prime}$ and $\infty$ compute
non-empty sets $E_{c}\subset\mathbb{Z}$ and $E_{\infty}\subset\mathbb{Z}$ defined as follows:

\begin{description}

\item[$(c_{1})$] If $\circ\left(  r_{c}\right)  =1$, then
$E_{c}=\{12\}$.

\item[$(c_{2})$] If $\circ\left(  r_{c}\right)  =2,$ and $r= \cdots +
b(x-c)^{-2}+\cdots$, then
\begin{displaymath}
E_{c}=\left\{ 6+k\sqrt{1+4b}:\quad
k=0,\pm1,\pm2,\pm3,\pm4,\pm5,\pm6\right\}.
\end{displaymath}

\item[$(\infty)$] If $\circ\left(  r_{\infty}\right)  =v\geq2,$ and $r=
\cdots + bx^{2}+\cdots$, then {\small $$E_{\infty }=\left\{
6+{12k\over m}\sqrt{1+4b}:\textrm{ }
k=0,\pm1,\pm2,\pm3,\pm4,\pm5,\pm6\right\},\textrm{ }
m\in\{4,6,12\}.$$}
\end{description}

{\bf Step 2.} Find $D\neq\emptyset$ defined by
$$D=\left\{
n\in\mathbb{Z}_{+}:\quad n=\frac{m}{12}\left(
e_{\infty}-{\displaystyle\sum\limits_{c\in\Gamma^{\prime}}}
e_{c}\right)  ,\forall e_{p}\in E_{p},\quad p\in\Gamma\right\}.$$
In this case we start with $m=4$ to obtain the solution,
afterwards $m=6$ and finally $m=12$. If $D=\emptyset$, then the
differential equation is not integrable because it falls in
case 4. Now, if $\mathrm{Card}(D)>0,$ then for each $n\in D$ with
its respective $m$, search for a rational function
$$\theta={m\over 12}{\displaystyle\sum\limits_{c\in\Gamma^{\prime}}}
\frac{e_{c}}{x-c}$$ and a polynomial $S$ defined as $$S=
{\displaystyle\prod\limits_{c\in\Gamma^{\prime}}} (x-c).$$

{\bf Step 3}. For each $n\in D$, with its respective $m$, search for a
monic polynomial $P_n=P$ of degree $n,$ such that $P$ can be
determined by the following polynomial recursion:
\begin{equation*}
\begin{aligned}
&P_{m}=-P,\\
&P_{i-1}=-SP_{i}'-\left( \left( m-i\right)
S'-S\theta\right)  P_{i}-\left( m-i\right)  \left(
i+1\right)  S^{2}rP_{i+1},\textrm{ for } i\in\{m,m-1,\ldots,1,0\},\\
&P_{-1}=0.
\end{aligned}
\end{equation*}
This can be done by using undetermined coefficients for $P$.
If $P$ does not exist, then the differential equation is not
integrable because it falls in case 4. Now, if $P$ exists search
$\omega$ such that $$ {\displaystyle\sum\limits_{i=0}^{m}}
\frac{S^{i}P}{\left( m-i\right)  !}\omega^{i}=0,$$ then a solution
of the differential equation is given by $$\zeta=e^{\int
\omega},$$ where $\omega$ is solution of the previous polynomial equation
of degree $m$.

\section*{Appendix B}
\renewcommand\theequation{B.\arabic{equation}}
In this appendix we show that the generating function for the case of linear death rates is given also by equation~\eqref{eq:genfun11}
if we assume $\delta<\beta$. Here we consider two separate cases:
\begin{itemize}

\item[(i)] $0\le z\le \frac{\delta}{\beta}$: here~\eqref{eq:integ11} reduces to
\begin{equation}
\ln G=\ln C+\frac{s}{\beta-\delta}\int \left(\frac{\beta}{\delta-\beta z}-\frac{1}{1-z}\right)dz,
\end{equation}
hence
\begin{equation}
G(z,s)=C(z)\left(\frac{1-z}{\delta-\beta z}\right)^{\frac{s}{\beta-\delta}},
\end{equation}
$C(z)$ being the constant obtained after integration. Variation of the constant in equation~\eqref{eq:ODE11} yields the first-order
ODE for $C(z)$,
\begin{equation}
(1-z)(\delta-\beta z)C'(z)\left(\frac{1-z}{\delta-\beta z}\right)^{\frac{s}{\beta-\delta}}=-z^{N_0}.
\end{equation}
We impose the condition $C(\delta/\beta)=0$ for $G(z,s)$ to be non-singular at $z=\frac{\delta}{\beta}<1$. Hence
\begin{equation}
C(z)=\int_z^{\delta/\beta} \frac{(\delta-\beta u)^{\frac{s}{\beta-\delta}-1}}{(1-u)^{\frac{s}{\beta-\delta}+1}}\,u^{N_0}du.
\end{equation}
Then the Laplace transform of the generating function can be written as
\begin{equation}
G(z,s)=\int_z^{\delta/\beta} \left(\frac{1-z}{\delta-\beta z}\right)\left[\left(\frac{1-z}{\delta-\beta z}\right)\left(\frac{\delta-\beta u}{1-u}\right)\right]^{\frac{s}{\beta-\delta}-1}\frac{u^{N_0}}{(1-u)^2}\,du.
\end{equation}
We change variable $u$ to $w(u):=\alpha\left(\frac{\delta-\beta u}{1-u}\right)$ with $\alpha:=\frac{1-z}{\delta-\beta z}$ and obtain
\begin{equation}
G(z,s)=\frac{1}{\beta-\delta}\int_0^1 w^{\frac{s}{\beta-\delta}-1}\left(\frac{w-\alpha\delta}{w-\alpha\beta}\right)^{N_0}dw.
\end{equation}
Finally we introduce a second change of variable, $w(t):=e^{-(\beta-\delta)t}$, which yields
\begin{equation}
G(z,s)=\int_0^{\infty} \left(\frac{w(t)-\alpha\delta}{w(t)-\alpha\beta}\right)^{N_0}e^{-st}dt,
\end{equation}
and the generating function is expressed as
\begin{equation}\label{eq:genfun2}
g(z,t)=\left(\frac{w(t)-\alpha\delta}{w(t)-\alpha\beta}\right)^{N_0}=\left[\frac{\delta-\beta z-(1-z)\delta e^{(\beta-\delta)t}}{\delta-\beta z-(1-z)\beta e^{(\beta-\delta)t}}\right]^{N_0},
\end{equation}
which exactly coincides with the expression obtained in Section~\ref{sec:Lap11}.

\item[(ii)] $\frac{\delta}{\beta}\le z \le 1$: in this case we can write
\begin{equation}
\ln G=\ln C-\frac{s}{\beta-\delta}\int \left(\frac{\beta}{\beta z-\delta}+\frac{1}{1-z}\right)dz,
\end{equation}
i.e.,
\begin{equation}
G(z,s)=C(z)\left(\frac{1-z}{\beta z-\delta}\right)^{\frac{s}{\beta-\delta}}.
\end{equation}
Variation of the constants implies
\begin{equation}
(1-z)(\beta z-\delta)C'(z)\left(\frac{1-z}{\beta z-\delta}\right)^{\frac{s}{\beta-\delta}}=z^{N_0},
\end{equation}
which can be integrated as
\begin{equation}
C(z)=\int_{\delta/\beta}^z \frac{(\beta u-\delta)^{\frac{s}{\beta-\delta}-1}}{(1-u)^{\frac{s}{\beta-\delta}+1}}\,u^{N_0}du.
\end{equation}
(notice the condition $C(\delta/\beta)=0$ for $G(z,s)$ to be finite at $z=\frac{\delta}{\beta}<1$). We can write
\begin{equation}
G(z,s)=\int_{\delta/\beta}^z \left(\frac{1-z}{\beta z-\delta}\right)\left[\left(\frac{1-z}{\beta z-\delta}\right)\left(\frac{\beta u-\delta}{1-u}\right)\right]^{\frac{s}{\beta-\delta}-1}\frac{u^{N_0}}{(1-u)^2}\,du.
\end{equation}
We change variables to $w(u):=\alpha\left(\frac{\beta u-\delta}{1-u}\right)$ with $\alpha:=\frac{1-z}{\beta z-\delta}$,
\begin{equation}
G(z,s)=\frac{1}{\beta-\delta}\int_0^1 w^{\frac{s}{\beta-\delta}-1}\left(\frac{w+\alpha\delta}{w+\alpha\beta}\right)^{N_0}dw,
\end{equation}
and after a second change of variable, $w(t):=e^{-(\beta-\delta)t}$, we finally obtain
\begin{equation}
G(z,s)=\int_0^{\infty} \left(\frac{w(t)+\alpha\delta}{w(t)+\alpha\beta}\right)^{N_0}e^{-st}dt.
\end{equation}
The generating function, in this case, is
\begin{equation}\label{eq:genfun3}
g(z,t)=\left(\frac{w(t)+\alpha\delta}{w(t)+\alpha\beta}\right)^{N_0}=\left[\frac{\beta z-\delta+(1-z)\delta e^{(\beta-\delta)t}}{\beta z-\delta+(1-z)\beta e^{(\beta-\delta)t}}\right]^{N_0},
\end{equation}
which coincides with~\eqref{eq:genfun11}.

\end{itemize}

\section*{Appendix C}
\renewcommand\theequation{C.\arabic{equation}}
In this Appendix we analyze in detail some cases of Kovacic's algorithm applied to equation~\eqref{eq:ODE11K2}. We are particularly
interested in those cases that impose a restriction on the values that $s$ can take to yield a closed-form solution, as well as in the cases where $m=N_0-1$ is a nonnegative integer. We start considering the combination of signs $(\varepsilon(\infty),\varepsilon(\delta/\beta),\varepsilon(1),\varepsilon(0))=(-,-,-,-)$. In this case,
$\hat{s}=\frac{s}{\delta-\beta}$ has to be a non-negative integer. Let $m=\hat{s}$ be a non-negative integer. Then we
form the rational function
\begin{equation}
\omega(z,s)=\frac{\alpha_0^-}{z}+\frac{\alpha_1^-}{z-1}+\frac{\alpha_{\delta/\beta}^-}{z-\delta/\beta}
=-\frac{N_0}{2z}+\frac{m-1}{2}\left(\frac{1}{1-z}+\frac{\beta}{\delta-\beta z}\right).
\end{equation}
Hence it can be checked that
\begin{equation}
\omega'(z,s)+\omega^2(z,s)-r(z,s)=\frac{m\beta[(N_0+m-1)z-N_0]}{z(1-z)(\delta-\beta z)}.
\end{equation}
We search for a polynomial of degree $m$ that satisfies $P_m''+2\omega P'_m+(\omega'+\omega^2-r)P_m=0$.
It turns out that $P_m(z)=(\delta-\beta z)^m$ satisfies the equation and there is a solution of the form
\begin{equation}
H(z,s)=(\delta-\beta z)^m\exp\left\{\int^z \omega(u,s)du\right\}=z^{-N_0/2}(1-z)^{\frac{1-m}{2}}(\delta-\beta z)^{\frac{1+m}{2}},
\end{equation}
where $m=\frac{s}{\delta-\beta}$ is a non-negative integer. Using~\eqref{eq:psi} we find the solution
\begin{equation}
G(z,s)=H(z,s)\psi(z,s)=\left(\frac{\delta-\beta z}{1-z}\right)^m,
\end{equation}
i.e., we obtain the solution given by equation~\eqref{eq:sol11} but specialized to non-negative integer values of the variable
$m=\frac{s}{\delta-\beta}$.

Similarly, consider the combination of signs $(\varepsilon(\infty),\varepsilon(\delta/\beta),\varepsilon(1),\varepsilon(0))=(-,+,+,-)$. Here
$-\hat{s}=-\frac{s}{\delta-\beta}$ has to be a non-negative integer. We define $m=-\hat{s}$ as a non-negative integer and
we find, as before,
\begin{equation}
\omega(z,s)=\frac{\alpha_0^-}{z}+\frac{\alpha_1^+}{z-1}+\frac{\alpha_{\delta/\beta}^+}{z-\delta/\beta}
=-\frac{N_0}{2z}+\frac{m-1}{2}\left(\frac{1}{1-z}+\frac{\beta}{\delta-\beta z}\right).
\end{equation}
Hence we obtain
\begin{equation}
\omega'(z,s)+\omega^2(z,s)-r(z,s)=\frac{m[\beta(N_0+m-1)z-\delta N_0]}{z(1-z)(\delta-\beta z)}.
\end{equation}
In this case the polynomial $P_m(z)=(1-z)^m$ satisfies~\eqref{eq:Pm} and there is a solution of the form
\begin{equation}
H(z,s)=(1-z)^m\exp\left\{\int^z \omega(u,s)du\right\}=z^{-N_0/2}(1-z)^{\frac{1+m}{2}}(\delta-\beta z)^{\frac{1-m}{2}}.
\end{equation}
Using~\eqref{eq:psi} we finally get
\begin{equation}
G(z,s)=H(z,s)\psi(z,s)=\left(\frac{\delta-\beta z}{1-z}\right)^{-m},
\end{equation}
which coincides with~\eqref{eq:sol11} because $m=-\frac{s}{\delta-\beta}$ in this case. Again, we recover the same
solution specialized to non-negative integer values of $m$.

Now we consider the combination of signs $(\varepsilon(\infty),\varepsilon(\delta/\beta),\varepsilon(1),\varepsilon(0))=(+,+,-,-)$. 
In this case $m=N_0-1$ has to be a non-negative integer; thus, we replace $N_0$ by $m+1$ in the following computations. Then we
form the rational function
\begin{equation}
\omega(z,s)=\frac{\alpha_0^-}{z}+\frac{\alpha_1^-}{z-1}+\frac{\alpha_{\delta/\beta}^+}{z-\delta/\beta}
=-\left(\frac{m+1}{2}\right)\frac{1}{z}+\left(\frac{s}{\delta-\beta}-1\right)\frac{1}{2(1-z)}-\left(\frac{s}{\delta-\beta}+1\right)\frac{\beta}{2(\delta-\beta z)}.
\end{equation}
Hence it can be checked that
\begin{equation}
\omega'(z,s)+\omega^2(z,s)-r(z,s)=-\frac{(m+1)s}{z(1-z)(\delta-\beta z)}.
\end{equation}
We search for a polynomial of degree $m$ that satisfies $P_m''+2\omega P'_m+(\omega'+\omega^2-r)P_m=0$ and we observe that for $m=0$, the polynomial $P_0=1$ does not satisfies such algebraic equation. Moreover, for $m>0$ the polynomial could exist with algebraic coefficients depending on $\beta$, $\delta$ and $s$. For example, the polynomial of degree $m=1$ is $P_1(z)=z-\delta/s$ only for $\delta=s-\beta$. 

Consider finally the combination of signs $(\varepsilon(\infty),\varepsilon(\delta/\beta),\varepsilon(1),\varepsilon(0))=(+,-,+,-)$. Then we
form the rational function
\begin{equation}
\omega(z,s)=\frac{\alpha_0^-}{z}+\frac{\alpha_1^+}{z-1}+\frac{\alpha_{\delta/\beta}^-}{z-\delta/\beta}
=-\left(\frac{m+1}{2}\right)\frac{1}{z}+\left(\frac{s}{\delta-\beta}+1\right)\frac{1}{2(1-z)}-\left(\frac{s}{\delta-\beta}-1\right)\frac{\beta}{2(\delta-\beta z)}.
\end{equation}
Then it holds
\begin{equation}
\omega'(z,s)+\omega^2(z,s)-r(z,s)=0.
\end{equation}
We search for a polynomial of degree $m$ that satisfies $P_m''+2\omega P'_m+(\omega'+\omega^2-r)P_m=0$ and we observe that for $m=0$, the 
polynomial $P_0=1$ satisfies such algebraic equation. In addition, for $m>0$ the polynomial exists with algebraic coefficients on $\beta$, $\delta$ and 
$s$. For example, we observe that the polynomial of degree $m=1$ is $P_1(z)=z$ when $\delta=0$ and $s=\beta-1$.

\noindent{\bf Remark}: although $N_0$ must be non-negative for its meaning in the birth-death processes, we can apply Kovacic algorithm when 
$m=-N_0-1$ is the degree of the polynomial $P_m$ assuming that $-N_0\in\mathbb{Z}^+$. Thus, we obtain the same results as above because the combinations
of signs $(\varepsilon(\infty),\varepsilon(\delta/\beta),\varepsilon(1),\varepsilon(0))=(+,+,-,-)$ and $(\varepsilon(\infty),\varepsilon(\delta/\beta),\varepsilon,\varepsilon(0))=(-,+,-,-)$ coincide with the combinations of signs 
$$(\varepsilon(\infty),\varepsilon(\delta/\beta),\varepsilon(1),\varepsilon(0))=(+,-,+,-)$$
and 
$$(\varepsilon(\infty),\varepsilon(\delta/\beta),\varepsilon(1),\varepsilon(0))=(-,-,+,-),$$ 
respectively.

\subsection*{Acknowledgements}

The authors kindly thank to the members of our Integrability Madrid Seminar for many fruitful discussions:  Rafael Hern\'andez-Heredero, Sonia Jim\'enez-Verdugo, Alvaro P\' erez-Raposo, Jos\'e Rojo-Montijano,  Sonia L. Rueda, Raquel S\'anchez-Cauce and Maria A. Zurro.


\pdfbookmark[1]{References}{ref}
\LastPageEnding

\end{document}